\newtheorem{theorem}{Theorem}
\newtheorem{lemma}[theorem]{Lemma}
\newtheorem{notation}[theorem]{Notation}
\newtheorem{remark}[theorem]{Remark}
\newenvironment{proof}[1][Proof]{\noindent\textbf{#1.} }{\rule{0.5em}{0.5em}}
\begin{document}

\title{Accelerated Quantum Amplitude Estimation without QFT}
\author{Alet Roux\thanks{\texttt{alet.roux@york.ac.uk}; Department of Mathematics,
University of York, Heslington, York YO10~5DD, United Kingdom.} and Tomasz
Zastawniak\thanks{\texttt{tomasz.zastawniak@york.ac.uk}; Department of
Mathematics, University of York, Heslington, York YO10~5DD, United Kingdom.}}
\date{}
\maketitle

\begin{abstract}
We put forward a Quantum Amplitude Estimation algorithm delivering superior
performance (lower quantum computational complexity and faster classical
computation parts) compared to the approaches available to-date. The algorithm
does not relay on the Quantum Fourier Transform and its quantum computational
complexity is of order $O(\frac{1}{\varepsilon})$ in terms of the target
accuracy $\varepsilon>0$. The $O(\frac{1}{\varepsilon})$ bound on quantum
computational complexity is also superior compared to those in the earlier
approaches due to smaller constants. Moreover, a much tighter bound is
obtained by means of computer-assisted estimates for the expected
value of quantum computational complexity. The correctness of the algorithm
and the $O(\frac{1}{\varepsilon})$ bound on quantum computational complexity
are supported by precise proofs.

\end{abstract}

\section{Introduction}

Let $\mathbf{A}$ be a unitary operator representing a quantum circuit such
that%
\begin{equation}
\mathbf{A}\left\vert 0\right\rangle _{w+1}=\sqrt{1-a}\left\vert \Psi
_{0}\right\rangle _{w}\!\left\vert 0\right\rangle +\sqrt{a}\left\vert \Psi
_{1}\right\rangle _{w}\!\left\vert 1\right\rangle \label{Eq:a43ff1}%
\end{equation}
for some unknown $a\in\lbrack0,1]$, where $\left\vert \Psi_{0}\right\rangle
_{w}$ and $\left\vert \Psi_{1}\right\rangle _{w}$ are normalised states of
width~$w$ (that is, $w$-qubit states). The goal of Quantum Amplitude
Estimation (QAE) is to compute an estimate~$\hat{a}$ of~$a$ to within given
accuracy $\varepsilon>0$ at a prescribed confidence level $1-\alpha\in\left(
0,1\right)  $, so that%
\begin{equation}
\mathbb{P}\{|\hat{a}-a|\leq\varepsilon\}\geq1-\alpha. \label{Eq:nf8as6g}%
\end{equation}

QAE is one of the fundamental procedures in quantum computing, and a building
block for quantum algorithms in diverse areas such as machine learning,
chemistry, or finance. In particular, QAE gives rise to quadratic quantum
speedup in Monte Carlo estimation.

Brassard \emph{et al.}~\cite{Bra2002} were the first to establish a QAE
procedure by combining ideas from the Grover and Shor quantum algorithms. The
approach of~\cite{Bra2002}, when adapted to an operator $\mathbf{A}$ of the
form~(\ref{Eq:a43ff1}), was to consider the operator%
\[
\mathbf{Q}=\mathbf{AS}_{0}\mathbf{A}^{\dagger}\mathbf{S}_{\Psi_{0}},
\]
where $\mathbf{S}_{0}=\mathbf{I}_{w+1}-2\left\vert 0\right\rangle
_{w+1}\!\left\langle 0\right\vert _{w+1}$ and $\mathbf{S}_{\Psi_{0}%
}=\mathbf{I}_{w}\otimes\left(  \mathbf{I}-2\left\vert 0\right\rangle
\!\left\langle 0\right\vert \right)  $, and apply to the state $\mathbf{A}%
\left\vert 0\right\rangle _{w+1}$ the operators $\mathbf{Q}^{2^{0}}%
,\mathbf{Q}^{2^{1}},\ldots,\mathbf{Q}^{2^{j-1}}$ controlled by~$j$ ancillary
qubits, sandwiched between the Quantum Fourier Transform (QFT) and the inverse
QFT acting on the ancillary qubits. The quantum computational complexity of
the QAE algorithm in~\cite{Bra2002}, understood as the number of applications
of~$\mathbf{Q}$, is of order~$O(\frac{1}{\varepsilon})$. However, the cost in
terms of quantum computing resources is considerable due to the use of QFT and
controlled gates $\mathbf{Q}^{2^{0}},\mathbf{Q}^{2^{1}},\ldots,\mathbf{Q}%
^{2^{j-1}}$. It is important to seek more efficient and less costly QAE
procedures that lend themselves to implementation on near-term quantum
computers, while matching the computational complexity of order~$O(\frac
{1}{\varepsilon})$.

Suzuki \emph{et al.}\textbf{~}\cite{Suz2020} put forward a version of QAE
which does not rely on QFT or controlled gates $\mathbf{Q}^{2^{0}}%
,\mathbf{Q}^{2^{1}},\ldots,\mathbf{Q}^{2^{j-1}}$, reducing the circuit width
and depth as compared to~\cite{Bra2002}. After measuring the rightmost qubit
in the states $\mathbf{Q}^{2^{0}}\mathbf{A}\left\vert 0\right\rangle
_{w+1},\mathbf{Q}^{2^{1}}\mathbf{A}\left\vert 0\right\rangle _{w+1}%
,\ldots,\mathbf{Q}^{2^{j-1}}\mathbf{A}\left\vert 0\right\rangle _{w+1}$, the
authors of~\cite{Suz2020} applied maximum likelihood estimation to obtain an
approximation of~$a$. They justified this procedure by heuristic
considerations and demonstrated empirically that the number of applications
of~$\mathbf{Q}$ needed to achieve accuracy~$\varepsilon$ appears to scale
roughly as $O\left(  \frac{1}{\varepsilon}\right)  $, but provided no rigorous
proof to support this conjecture.

Aaranson and Rall~\cite{Aar2020} proposed a QAE algorithm without QFT or
controlled gates~$\mathbf{Q}^{k}$, proven to achieve the same quantum
computational complexity of order $O(\frac{1}{\varepsilon})$ as
in~\cite{Bra2002}, but with large constants making it ill-suited for
implementation on near-term quantum computers. Near-term efficiency was
considerably improved by Grinko \emph{et al.}~\cite{Gri2021} at the expense of
asymptotic computational complexity of the QAE algorithm, for which the
authors obtained a bound of order $O(\frac{1}{\varepsilon}\log(\log\frac
{1}{\varepsilon}))$. A different algorithm of this kind, also with query
complexity of order $O(\frac{1}{\varepsilon}\log(\log\frac{1}{\varepsilon}))$,
belongs to Nakaji~\cite{Nak2020}. Fukuzawa \emph{et al.}~\cite{Fuk2023}
proposed a modification of the algorithm in~\cite{Gri2021} with computational
complexity of order~$O(\frac{1}{\varepsilon})$ and constants competitive with
those in~\cite{Gri2021} and \cite{Nak2020}.

Here we present another version of QAE, also related to that in~\cite{Gri2021}%
, but with computational complexity of order~$O(\frac{1}{\varepsilon})$ and
even better constants in the asymptotic bound, and better performance than all
the aforesaid approaches. In
Algorithms~\ref{algQAE} and~\ref{algQAEacc} we present variants of QAE which
do not utilize~QFT or controlled gates~$\mathbf{Q}^{k}$ either and accomplish
these aims. Theorems~\ref{Thm:2and6at3l} and~\ref{Thm:p9a3n7gd} provide
precise proofs of the correctness of the algorithms, that is, of
achieving~(\ref{Eq:nf8as6g}), with upper bounds on the number of applications
of~$\mathbf{Q}$ of order $O(\frac{1}{\varepsilon})$ and constants smaller than
in the papers listed above. Indeed, our algorithms, and particularly
Algorithm~\ref{algQAEacc}, reduce the number of applications of~$\mathbf{Q}$
as compared to all previous work, including~\cite{Gri2021} and~\cite{Fuk2023}. This being so, our
algorithms are\ well suited for near-term quantum computers. The classical
computation parts of the algorithms are also more efficient
compared, for example, to~\cite{IQAE} and~\cite{MIQAE}.

We rely on the following well-known property of~$\mathbf{Q}$ (see
\cite{Bra2002}):%
\begin{equation}
\mathbf{Q}^{k}\mathbf{A}\left\vert 0\right\rangle _{w+1}=\cos\left(  \left(
2k+1\right)  \theta\right)  \left\vert \Psi_{0}\right\rangle _{w}\!\left\vert
0\right\rangle +\sin\left(  \left(  2k+1\right)  \theta\right)  \left\vert
\Psi_{1}\right\rangle _{w}\!\left\vert 1\right\rangle \label{Eq:nnfa53r}%
\end{equation}
for each $k=0,1,2,\ldots$\thinspace, where%
\[
\sin^{2}(\theta)=a\quad\text{with}\quad\theta\in\textstyle\left[  0,\frac{\pi
}{2}\right]  .
\]
It means that a measurement of the rightmost qubit in~(\ref{Eq:nnfa53r}) will
produce~$\left\vert 1\right\rangle $ with probability $\sin^{2}\left(  \left(
2k+1\right)  \theta\right)  $. We write%
\[
n=%
\texttt{\textit{QuantCirc}}%
(k,N)
\]
for the number of times when outcome~$\left\vert 1\right\rangle $ is produced
in~$N$ runs of the circuit $\mathbf{Q}^{k}\mathbf{A}\left\vert 0\right\rangle
_{w+1}$ followed by a measurement of the rightmost
qubit. Hence,~$n$ can be regarded as the number of successful outcomes in~$N$
i.i.d.\ Bernoulli trials with probability of success $\sin^{2}\left(  \left(
2k+1\right)  \theta\right)  $, and $\frac{n}{N}$ can be taken as an estimate
for the probability of a successful outcome, that is, outcome~$\left\vert
1\right\rangle $.

Let%
\begin{equation}
\sin^{2}((2k+1)\hat{\theta})=\frac{n}{N} \label{Eq:bftarm9amn}%
\end{equation}
for some $\hat{\theta}\in\left[  0,\frac{\pi}{2}\right]  $, and let $E$ and
$F$ be given by~(\ref{Eq:nnc74tsfa}) and~(\ref{Eq:bc645qfsaa}). These
constants are such that if%
\[
|\sin^{2}((2k+1)\hat{\theta})-\sin^{2}((2k+1)\theta)|\leq 2E,
\]
(which will be so with high probability if~$N$ is large enough), then%
\[
|(2k+1)\hat{\theta}-(2k+1)\theta|\leq 2F,
\]
as long as there is an integer~$m$ such that both $(2k+1)\hat{\theta}$ and
$(2k+1)\theta$ belong to the same interval\ $\left[  m\frac{\pi}{2}%
,(m+1)\frac{\pi}{2}\right]  $. In that case we get $|\hat{\theta}-\theta
|\leq\frac{2F}{2k+1}$, and taking $\hat{a}=\sin^{2}(\hat{\theta})$ as an
estimate for $a=\sin^{2}(\theta)$, we obtain
\[
|\hat{a}-a|=|\sin^{2}(\hat{\theta})-\sin^{2}(\theta)|\leq|\hat{\theta}%
-\theta|\leq\frac{2F}{2k+1}.
\]
The accuracy $\frac{2F}{2k+1}$ of this estimate can be made as small as desired
by taking a sufficiently large~$k$. However, for this to work we need to know
the value of~$m$ such that $(2k+1)\theta$ belongs to $\left[  m\frac{\pi}%
{2},(m+1)\frac{\pi}{2}\right]  $, so that among all the solutions~$\hat
{\theta}$ of equation~(\ref{Eq:bftarm9amn}) we can select the one that belongs
to the interval $\left[  m\frac{\pi}{2},(m+1)\frac{\pi}{2}\right]  $. But we
do not know~$\theta$ (because we do not know~$a$), and will need to rely on
Lemma~\ref{Lem:bddta7d7gs} to keep track of~$m$. We do this in a different and
perhaps simpler manner than Grinko \emph{et al.}~\cite{Gri2021} or Fukuzawa
\emph{et al.}~\cite{Fuk2023}.

Where our approach differs significantly from~\cite{Gri2021} is in how we
achieve computational complexity $O(\frac{1}{\varepsilon})$ rather than
$O(\frac{1}{\varepsilon}\log(\log(\frac{1}{\varepsilon})))$. Instead of
assigning the same fraction~$\frac{\alpha}{T}$ of~$\alpha$ to each round~$i$
of the main loop in the algorithm in~\cite{Gri2021} (where $1-\alpha$ is the
prescribed confidence level and~$T$ is an upper bound on the number of
rounds), we take~$\alpha_{i}$ increasing with~$i$ as defined
by~(\ref{Eq:nc746ardf5}) or~(\ref{Eq:57ahd8anvx}) in such a manner that the
sum of the$~\alpha_{i}$'s over all the rounds~$i$ does not exceed~$\alpha$.
This is similar to, but different from \cite{Fuk2023}, leading to better
constants in the asymptotics for computational complexity and better
performance for relevant input data ranges. For low values of~$i$, that is,
when the powers of the $\mathbf{Q}$ operator in the quantum circuit are also
low, the lower values of~$\alpha_{i}$ force more runs of the circuit. But
fewer runs are needed as~$i$ and $\alpha_{i}$ increase and the powers
of\textbf{~}$\mathbf{Q}$ become higher. This is what makes it possible to
reduce the upper bound for the overall number of applications of~$\mathbf{Q}$
from $O(\frac{1}{\varepsilon}\log(\log(\frac{1}{\varepsilon})))$ to
$O(\frac{1}{\varepsilon})$ with tight constants, as demonstrated in
Theorems~\ref{Thm:2and6at3l} and~\ref{Thm:p9a3n7gd}. Moreover, in
Remarks~\ref{Rem:bbdta54v} and~\ref{Rem:awd6a54a} we show that our choice of
the\ $\alpha_{i}$ is optimal in a certain sense.

The above upper bounds on quantum computational complexity, which is a random
variable, are worst-case estimates holding with probability greater than or
equal to the prescribed confidence level $1-\alpha$. A more representative
quantity is the expected value of quantum computational complexity, that is,
the expected number of applications of $\mathbf{Q}$, for which we obtain a
much tighter bound by means of computer-assisted estimates.

\begin{notation}
\upshape Throughout this paper we denote by $\left[  x,y\right]  $ the closed
interval with end-points $x,y\in\mathbb{R}$, irrespective of whether $x\leq y$
or $y\leq x$.
\end{notation}

\section{Accelerated QAE algorithm}

We begin with formulating a simple version of our algorithm; see
Algorithm~\ref{algQAE}. We prove the correctness of this algorithm and obtain
an upper bound for quantum computational complexity in
Theorem~\ref{Thm:2and6at3l}.%

\begin{algorithm}
\caption{}\label{algQAE}
\begin{algorithmic}[1] 
\Require $\varepsilon>0$ and $\alpha\in(0,1)$
\State $E:=\frac{1}{2}\left(\sin^2\left(\frac{3}{7}\frac{\pi}{2}\right)
    -\sin^2\left(\frac{1}{3}\frac{\pi}{2}\right)\right),\quad F:=\frac{1}{2}\arcsin\sqrt{2E}$
\State $C:=\frac{4}{6F+\pi}$
\State $i:=-1,\quad K_0:=1,\quad m_0:=0$
\Repeat
    \State $i:=i+1$
    \State $\alpha_i:=C\alpha\varepsilon K_i,\quad
        N_i:=\left\lceil\frac{1}{2E^2}\ln\frac{2}{\alpha_i}\right\rceil$
    \State $n_i:=\texttt{\textit{QuantCirc}}(\frac{K_i-1}{2},N_i),\quad \hat{A}_i:=\frac{n_i}{N_i}$
    \State compute $\hat{\theta}^{\flat}_i$ and $\hat{\theta}^{\sharp}_i$ such that
        \Statex \quad\quad\quad
        $K_i\hat{\theta}^{\flat}_i,K_i\hat{\theta}^{\sharp}_i\in
        \big[m_i\frac{\pi}{2},(m_i+1)\frac{\pi}{2}\big]$
        \Statex \quad\quad\quad and\quad
        $\sin^2(K_i\hat{\theta}^{\flat}_i)=\max\{\hat{A}_i-E,0\}$
        \Statex \quad\quad\quad and\quad
        $\sin^2(K_i\hat{\theta}^{\sharp}_i)=\min\{\hat{A}_i+E,1\}$
    \State find $L_i\in\{3,5,7\}$ and $m_{i+1}\in L_im_i+\{0,\ldots,L_i-1\}$ such that
        \Statex \quad\quad\quad
        $[L_iK_i\hat{\theta}^{\flat}_i,L_iK_i\hat{\theta}^{\sharp}_i]\subset
        \big[m_{i+1}\frac{\pi}{2},(m_{i+1}+1)\frac{\pi}{2}\big]$
    \State $K_{i+1}:=L_iK_i$
\Until{$|\hat{\theta}^{\sharp}_i-\hat{\theta}^{\flat}_i|\le2\varepsilon$}
\State $I:=i$
\State \textbf{return} $\hat{a}:=\sin^2(\frac{1}2{}(\hat{\theta}^{\flat}_I+\hat{\theta}^{\sharp}_I))$
\end{algorithmic}
\end{algorithm}%

\begin{theorem}
\label{Thm:2and6at3l}\upshape Given an accuracy $\varepsilon>0$ and confidence
level $1-\alpha\in(0,1)$, Algorithm~\ref{algQAE} returns an estimate~$\hat{a}$
of~$a$ such that%
\begin{equation}
\mathbb{P}\{\left\vert \hat{a}-a\right\vert \leq\varepsilon\}\geq1-\alpha.
\label{Eq:ahs74ta5}%
\end{equation}
The algorithm involves at most%
\begin{align}
M  &  <\frac{1}{\varepsilon}\left(  \frac{1}{4E^{2}}\left(  \frac{3}{2}%
F\ln\left(  \frac{2}{C\alpha F}\right)  +\frac{3}{4}F\ln\left(  3\right)
+\frac{\pi}{4}\ln\left(  \frac{8}{C\alpha\pi}\right)  \right)  +\frac{3}%
{4}F+\frac{\pi}{8}\right) \nonumber\\
&  \approx\frac{1}{\varepsilon}\left(  85.637-55.674\ln\left(  \alpha\right)
\right)  \label{Eq:bbc65a43}%
\end{align}
applications of~$\mathbf{Q}$, where the constants $E,F,C$ are given by
(\ref{Eq:nnc74tsfa}), (\ref{Eq:bc645qfsaa}), (\ref{Eq:7ag45ar}). That is, the
quantum computational complexity~$M$ of the algorithm scales as~$O(\frac
{1}{\varepsilon})$.
\end{theorem}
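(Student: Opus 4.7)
The plan is to treat correctness and complexity separately, with Hoeffding's inequality as the sole probabilistic input. For each round $i$, the counter $n_i$ returned by \texttt{QuantCirc} is, by \eqref{Eq:nnfa53r}, the number of successes in $N_i$ independent Bernoulli trials with success probability $\sin^{2}(K_i\theta)$; the choice $N_i\ge\frac{1}{2E^{2}}\ln\frac{2}{\alpha_i}$ in line~6 combined with Hoeffding's inequality then gives $\mathbb{P}(G_i^c)\le\alpha_i$, where $G_i=\{|\hat{A}_i-\sin^{2}(K_i\theta)|\le E\}$. Conditioning on $G_0\cap\cdots\cap G_i$, I would prove by induction on $i$ that $K_i\theta$ lies in the half-period $[m_i\tfrac{\pi}{2},(m_i+1)\tfrac{\pi}{2}]$ (so the angles $\hat{\theta}^{\flat}_i,\hat{\theta}^{\sharp}_i$ defined in line~7 are unambiguous), that $\theta\in[\hat{\theta}^{\flat}_i,\hat{\theta}^{\sharp}_i]$, and that the bracket has width at most $2F/K_i$. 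The inductive step uses Lemma~\ref{Lem:bddta7d7gs} to produce the pair $(L_i,m_{i+1})$ of line~8; the constants $E$ and $F$ in \eqref{Eq:nnc74tsfa}, \eqref{Eq:bc645qfsaa} are calibrated so that even when the bracket saturates its upper bound some $L_i\in\{3,5,7\}$ is available, and the restriction to odd $L_i$ keeps $K_i$ odd so that $\tfrac{K_i-1}{2}$ is integral.

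To convert the per-round bound into the confidence statement \eqref{Eq:ahs74ta5} I would apply a union bound, reducing matters to $\sum_{i=0}^{I}\alpha_i=C\alpha\varepsilon\sum_{i=0}^{I}K_i\le\alpha$. Because $K_{i+1}\ge 3K_i$ the geometric sum collapses to at most $\tfrac{3}{2}K_I$. The termination rule together with the bracket estimate $2F/K_i$ forces $K_I$ to be at most a constant multiple of $F/\varepsilon$ in the worst case; a short calculation with the value $C=\frac{4}{6F+\pi}$ from \eqref{Eq:7ag45ar} (the extra $\pi$-term accommodating the half-period boundary contribution from Lemma~\ref{Lem:bddta7d7gs}) then closes the bound. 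Once termination has been reached, the identity $\hat{a}=\sin^{2}\bigl(\tfrac{1}{2}(\hat{\theta}^{\flat}_I+\hat{\theta}^{\sharp}_I)\bigr)$ combined with $\theta\in[\hat{\theta}^{\flat}_I,\hat{\theta}^{\sharp}_I]$, the elementary estimate $|\sin^{2}x-\sin^{2}y|\le|x-y|$, and the bracket width $\le 2\varepsilon$ deliver $|\hat{a}-a|\le\varepsilon$, completing the correctness part.

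For the complexity, the number of applications of $\mathbf{Q}$ in round $i$ is $\tfrac{K_i-1}{2}N_i$, so I would bound
\[
M=\sum_{i=0}^{I}\frac{K_i-1}{2}N_i\le\sum_{i=0}^{I}\frac{K_i-1}{2}\left(\frac{1}{2E^{2}}\ln\tfrac{2}{C\alpha\varepsilon K_i}+1\right).
\]
Splitting $\ln\tfrac{2}{C\alpha\varepsilon K_i}=\ln\tfrac{2}{C\alpha F}+\ln\tfrac{F}{\varepsilon K_i}$ and using the geometric decay of $K_i$ from the top to collapse each resulting sum to a bounded multiple of its largest term produces the three logarithmic pieces $\tfrac{3}{2}F\ln\tfrac{2}{C\alpha F}$, $\tfrac{3}{4}F\ln 3$ and $\tfrac{\pi}{4}\ln\tfrac{8}{C\alpha\pi}$ that appear in \eqref{Eq:bbc65a43}, together with the additive residues $\tfrac{3}{4}F+\tfrac{\pi}{8}$ coming from the ceiling in the definition of $N_i$. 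The numerical form on the second line of \eqref{Eq:bbc65a43} then follows by substituting the explicit values of $E,F,C$. I expect the main obstacle to be precisely this last piece of bookkeeping: tracking the interaction between the $K_I$-bound, the $\pi/2$ half-period boundary term arising from Lemma~\ref{Lem:bddta7d7gs}, and the split of the logarithm above, with enough care to recover the displayed constants rather than merely the qualitative $O(1/\varepsilon)$ estimate.
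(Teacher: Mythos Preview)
Your proposal follows the paper's proof closely: Hoeffding per round, induction via Lemma~\ref{Lem:bddta7d7gs} for the bracket containment $\theta\in[\hat\theta_i^\flat,\hat\theta_i^\sharp]$, a union bound over the rounds, and geometric control of the $K_i$ for the complexity estimate. The overall architecture is correct and matches the paper.

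One point in your sketch needs sharpening, and it is exactly the bookkeeping you flagged. You collapse the geometric sum to at most $\tfrac{3}{2}K_I$ and then bound $K_I$. But $\tfrac{3}{2}K_I$ together with the half-period bound $K_I<\tfrac{\pi}{4\varepsilon}$ gives only $\sum_i K_i<\tfrac{3\pi}{8\varepsilon}$, hence $\sum_i\alpha_i<C\alpha\cdot\tfrac{3\pi}{8}\approx 1.10\,\alpha$, which overshoots. The paper instead splits the sum at the penultimate index: for $i\le I-1$ it uses $K_i\le 3^{-(I-1-i)}K_{I-1}<3^{-(I-1-i)}F/\varepsilon$ (from the bracket width $\le 2F/K_{I-1}$ and non-termination at step $I-1$), while the single term $K_I<\tfrac{\pi}{4\varepsilon}$ comes from line~9 of the algorithm, namely the containment $[K_I\hat\theta_{I-1}^\flat,K_I\hat\theta_{I-1}^\sharp]\subset[m_I\tfrac{\pi}{2},(m_I+1)\tfrac{\pi}{2}]$. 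This yields $\sum_{i=0}^I K_i<\tfrac{3}{2}F/\varepsilon+\tfrac{\pi}{4\varepsilon}=(6F+\pi)/(4\varepsilon)$, which is exactly what makes $C=4/(6F+\pi)$ work. The same split (terms $i\le I-1$ versus the lone term $i=I$) is also what produces the three logarithmic pieces in~\eqref{Eq:bbc65a43}; in particular $\tfrac{\pi}{4}\ln\tfrac{8}{C\alpha\pi}$ is the contribution of $i=I$ alone. Finally, for the substitution step in the complexity bound the paper applies the monotonicity of $x\mapsto x\ln(c/x)$ with $c=2/(C\alpha\varepsilon)$ \emph{before} splitting the logarithm, so that the range check $K_i<c/e$ is satisfied; splitting first as you describe would require $K_i<F/(e\varepsilon)$, which can fail for $i$ near $I-1$.
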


\begin{proof}
Take an $a\in\left[  0,1\right]  $ and the angle $\theta\in\left[  0,\frac
{\pi}{2}\right]  $ such that%
\[
a=\sin^{2}(\theta),
\]
and for each $i=0,1,\ldots$ put%
\[
A_{i}=\sin^{2}(K_{i}\theta).
\]

For the sake of this argument, suppose that the \textbf{repeat} loop in
Algorithm~\ref{algQAE} does not break when the condition $|\hat{\theta}%
_{i}^{\sharp}-\hat{\theta}_{i}^{\flat}|\leq2\varepsilon$ is satisfied, but
keeps running indefinitely. This defines $\alpha_{i},N_{i},n_{i},\hat{A}%
_{i},K_{i},\hat{\theta}_{i}^{\sharp},\hat{\theta}_{i}^{\sharp},L_{i},m_{i}$
for each $i=0,1,\ldots$~. Then, $I$ can be taken to be the
smallest~$i=0,1,\ldots$ satisfying the condition $|\hat{\theta}_{i}^{\sharp
}-\hat{\theta}_{i}^{\flat}|\leq2\varepsilon$.

We have%
\begin{equation}
\hat{A}_{i}=\frac{n_{i}}{N_{i}}, \label{Eq:xa10an01}%
\end{equation}
where%
\begin{equation}
n_{i}=\textstyle%
\texttt{\textit{QuantCirc}}%
(\frac{K_{i}-1}{2},N_{i}) \label{Eq:yg5av16a}%
\end{equation}
can be regarded as the number of successful outcomes in~$N_{i}$
i.i.d.\ Bernoulli trials with probability of success~$A_{i}$. Let%
\[
B=\textstyle\bigcap_{i=0}^{I}B_{i},
\]
where, for each $i=0,1,\ldots\,$,
\[
B_{i}=\{|\hat{A}_{i}-A_{i}|\leq E\}
\]
with%
\begin{equation}
N_{i}=\left\lceil \frac{1}{2E^{2}}\ln\frac{2}{\alpha_{i}}\right\rceil
\label{Eq:nnfys5era}%
\end{equation}
and
\begin{equation}
\alpha_{i}=C\alpha\varepsilon K_{i}. \label{Eq:nc746ardf5}%
\end{equation}
In these formulae~$E$ and $F$ are given by~(\ref{Eq:nnc74tsfa})
and~(\ref{Eq:bc645qfsaa}), and~$C$ by~(\ref{Eq:7ag45ar}). By Hoeffding's
inequality~\cite{Hoeff} for~$N_{i}$ i.i.d.$\ $Benoulli trials applied to the
probability of~$B_{i}$ conditioned on the algorithm outcomes for rounds
$0,1,\ldots,i-1$ of the \textbf{repeat} loop, we have%
\[
\mathbb{P}(B_{i}|B_{0}\cap\cdots\cap B_{i-1})\geq1-2e^{-2N_{i}E^{2}}%
\geq1-\alpha_{i}%
\]
for each $i=0,1,\ldots\,$.

Consider the situation when the outcome of Algorithm~\ref{algQAE} is in~$B$.
Observe that%
\[
|K_{I-1}\hat{\theta}_{I-1}^{\sharp}-K_{I-1}\hat{\theta}_{I-1}^{\flat}%
|=2f(\hat{A}_{I-1},E)\leq2f(E,E)\leq2F
\]
by the definitions and properties of~$E, F, f$ (see
Section~\ref{Sect:AuxRes}) and the fact that $\sin^{2}(K_{I-1}\hat{\theta
}_{I-1}^{\flat})=\max\{\hat{A}_{I-1}-E,0\}$, $\sin^{2}(K_{I-1}\hat{\theta
}_{I-1}^{\sharp})=\min\{\hat{A}_{I-1}+E,1\}$, and $K_{I-1}\hat{\theta}%
_{I-1}^{\sharp},K_{I-1}\hat{\theta}_{I-1}^{\flat}\in\left[  m_{I-1}\frac{\pi
}{2},(m_{I-1}+1)\frac{\pi}{2}\right]  $. We have $|\hat{\theta}_{I-1}^{\sharp
}-\hat{\theta}_{I-1}^{\flat}|>2\varepsilon$ since $I$ is the smallest
$i=0,1,\ldots$ satisfying the condition $|\hat{\theta}_{i}^{\sharp}%
-\hat{\theta}_{i}^{\flat}|\leq2\varepsilon$. It follows that%
\[
K_{I-1}\leq\frac{2F}{|\hat{\theta}_{I-1}^{\sharp}-\hat{\theta}_{I-1}^{\flat}%
|}<\frac{F}{\varepsilon}.
\]
Then, for each $i=0,1,\ldots,I-1$,%
\begin{equation}
K_{i}\leq\frac{L_{I-2}}{3}\times\cdots\times\frac{L_{i}}{3}K_{i}=3^{-I+1+i}K_{I-1}%
<3^{-I+1+i}\frac{F}{\varepsilon} \label{Eq:nf756af344}%
\end{equation}
since $L_{i},\ldots,L_{I-2}\in\left\{  3,5,7\right\}  $. For $i=I$, this
inequality would read $K_{I}\leq\frac{3F}{\varepsilon}$, but this may not
hold. However, there is a higher bound for~$K_{I}$, namely,%
\begin{equation}
K_{I}<\frac{\pi}{4\varepsilon}. \label{EqL3qy7462rf}%
\end{equation}
The last inequality holds because $|\hat{\theta}_{I-1}^{\sharp}-\hat{\theta
}_{I-1}^{\sharp}|>2\varepsilon$ and $K_{I}\hat{\theta}_{I-1}^{\flat},K_{I}%
\hat{\theta}_{I-1}^{\sharp}\in\left[  m_{I}\frac{\pi}{2},\left(
m_{I}+1\right)  \frac{\pi}{2}\right]  $, which means that $|K_{I}\hat{\theta
}_{I-1}^{\sharp}-K_{I}\hat{\theta}_{I-1}^{\flat}|\leq\frac{\pi}{2}$.

It follows that%
\begin{align*}
\sum_{i=0}^{I}\alpha_{i}  &  =C\alpha\varepsilon\sum_{i=0}^{I}K_{i}%
<C\alpha\bigg(\sum_{i=0}^{I-1}3^{-I+1+i}F+\frac{\pi}{4}\bigg)=C\alpha
\bigg(\sum_{j=0}^{I-1}3^{-j}F+\frac{\pi}{4}\bigg)\\
&  <C\alpha\bigg(\sum_{j=0}^{\infty}3^{-j}F+\frac{\pi}{4}\bigg)=C\alpha
\bigg(\frac{3}{2}F+\frac{\pi}{4}\bigg)=C\alpha\frac{6F+\pi}{4}=\alpha
\end{align*}
when%
\begin{equation}
C=\frac{4}{6F+\pi}\approx0.93314. \label{Eq:7ag45ar}%
\end{equation}
Since the events $B_{0},B_{1},\ldots$ are independent of~$I$, it follows that
the conditional probability $\mathbb{P}(B|I)$ satisfies%
\[
\mathbb{P}(B|I)=\prod_{i=0}^{I}\mathbb{P}(B_{i}|B_{0}\cap\cdots\cap
B_{i-1})\geq\prod_{i=0}^{I}\left(  1-\alpha_{i}\right)  \geq1-\sum_{i=0}%
^{I}\alpha_{i}>1-\alpha,
\]
so%
\begin{equation}
\mathbb{P}(B)=\mathbb{E}\left(  \mathbb{P}(B|I)\right)  \geq1-\alpha.
\label{Eq:nbf6srdvs}%
\end{equation}

Next, we claim that%
\begin{equation}
\theta\in\lbrack\hat{\theta}_{i}^{\flat},\hat{\theta}_{i}^{\sharp}%
]\quad\text{on }B\text{ for each }i=0,1,\ldots,I. \label{Eq:jjaydrcvd9s1-3}%
\end{equation}
To verify the claim, let us assume, once again, that the outcome of
Algorithm~\ref{algQAE} is in~$B$, and proceed by induction. For $i=0$, we have
$K_{0}=0$ and $m_{0}=0$, so $\theta,\hat{\theta}_{0}^{\flat},\hat{\theta}%
_{0}^{\sharp}\in\left[  0,\frac{\pi}{2}\right]  $. With $|\hat{A}_{0}%
-A_{0}|\leq E\ $in~$B_{0}\supset B$, it means that%
\begin{align*}
\sin^{2}(\theta)=A_{0}  &  \in\lbrack\hat{A}_{0}-E,\hat{A}_{0}+E]\cap\left[
0,1\right] \\
&  =[\max\{\hat{A}_{0}-E,0\},\min\{\hat{A}_{0}+E,1\}]\\
&  =[\sin^{2}(\hat{\theta}_{0}^{\flat}),\sin^{2}(\hat{\theta}_{0}^{\sharp})],
\end{align*}
so $\theta\in\lbrack\hat{\theta}_{0}^{\flat},\hat{\theta}_{0}^{\sharp}]$. Now
suppose that $\theta\in\lbrack\hat{\theta}_{i}^{\flat},\hat{\theta}%
_{i}^{\sharp}]$ for some $i=0,1,\ldots,I-1$. According to
Algorithm~\ref{algQAE},%
\begin{equation}
\textstyle K_{i}\hat{\theta}_{i}^{\flat},K_{i}\hat{\theta}_{i}^{\sharp}%
\in\left[  m_{i}\frac{\pi}{2},\left(  m_{i}+1\right)  \frac{\pi}{2}\right]
\label{Eq:uuanera424ca}%
\end{equation}
and%
\begin{align}
\sin^{2}(K_{i}\hat{\theta}_{i}^{\flat})  &  =\max(\hat{A}_{i}%
-E,0),\label{Eq:hhatefafsr}\\
\sin^{2}(K_{i}\hat{\theta}_{i}^{\sharp})  &  =\min(\hat{A}_{i}+E,1).
\label{Eq:99ah34a3}%
\end{align}
By Lemma~\ref{Lem:bddta7d7gs}, we can find an $L_{i}\in\left\{  3,5,7\right\}
$ and an $m_{i+1}\in L_{i}m_{i}+\{0,\ldots,L_{i}-1\}$ such that%
\[
\textstyle[L_{i}K_{i}\hat{\theta}_{i}^{\flat},L_{i}K_{i}\hat{\theta}%
_{i}^{\sharp}]\subset\left[  m_{i+1}\frac{\pi}{2},\left(  m_{i+1}+1\right)
\frac{\pi}{2}\right]  .
\]
With $K_{i+1}=L_{i}K_{i}$, it follows by the induction hypothesis that%
\[
\textstyle K_{i+1}\theta\in\lbrack K_{i+1}\hat{\theta}_{i}^{\flat},K_{i+1}%
\hat{\theta}_{i}^{\sharp}]\subset\left[  m_{i+1}\frac{\pi}{2},\left(
m_{i+1}+1\right)  \frac{\pi}{2}\right]  ,
\]
and from Algorithm~\ref{algQAE} we have%
\[
\textstyle K_{i+1}\hat{\theta}_{i+1}^{\flat},K_{i+1}\hat{\theta}_{i+1}%
^{\sharp}\in\left[  m_{i+1}\frac{\pi}{2},\left(  m_{i+1}+1\right)  \frac{\pi
}{2}\right]  .
\]
We know that $|\hat{A}_{i+1}-A_{i+1}|\leq E\ $on $B$, which means that%
\begin{align*}
\sin^{2}(K_{i+1}\theta)=A_{i+1}  &  \in\lbrack\hat{A}_{i+1}-E,\hat{A}%
_{i+1}+E]\cap\left[  0,1\right] \\
&  =[\max\{\hat{A}_{i+1}-E,0\},\min\{\hat{A}_{i+1}+E,1\}]\\
&  =[\sin^{2}(K_{i+1}\hat{\theta}_{i+1}^{\flat}),\sin^{2}(K_{i+1}\hat{\theta
}_{i+}^{\sharp})].
\end{align*}
As a result,%
\[
K_{i+1}\theta\in\lbrack K_{i+1}\hat{\theta}_{i+1}^{\flat},K_{i+1}\hat{\theta
}_{i+1}^{\sharp}],
\]
so%
\[
\theta\in\lbrack\hat{\theta}_{i+1}^{\flat},\hat{\theta}_{i+1}^{\sharp}],
\]
completing the proof of the claim.

For%
\begin{equation}
\hat{a}=\textstyle\sin^{2}(\frac{1}{2}(\hat{\theta}_{I}^{\flat}+\hat{\theta
}_{I}^{\sharp})), \label{Eq:m98ab1ga}%
\end{equation}
it follows that%
\begin{align*}
|\hat{a}-a|  &  =\textstyle|\sin^{2}(\frac{1}{2}(\hat{\theta}_{I}^{\flat}%
+\hat{\theta}_{I}^{\sharp}))-\sin^{2}(\theta)|\\
&  \leq\textstyle|\frac{1}{2}(\hat{\theta}_{I}^{\flat}+\hat{\theta}%
_{I}^{\sharp})-\theta|\leq\frac{1}{2}|\hat{\theta}_{I}^{\sharp}-\hat{\theta
}_{I}^{\flat}|\leq\varepsilon\quad\text{on }B.
\end{align*}
The first inequality holds because $|\sin^{2}\alpha-\sin^{2}\beta|\leq
|\alpha-\beta|$, the second one because $\theta\in\lbrack\hat{\theta}%
_{I}^{\flat},\hat{\theta}_{I}^{\sharp}]$ on~$B$ by~(\ref{Eq:jjaydrcvd9s1-3}),
and the last one because~$I$ is the smallest $i=0,1,\ldots$ such that%
\begin{equation}
|\hat{\theta}_{i}^{\sharp}-\hat{\theta}_{i}^{\flat}|\leq2\varepsilon.
\label{Eq:g4a61dal}%
\end{equation}
Together with (\ref{Eq:nbf6srdvs}), this yields%
\[
\mathbb{P}\{|\hat{a}-a|\leq\varepsilon\}\geq\mathbb{P}(B)\geq1-\alpha,
\]
proving (\ref{Eq:ahs74ta5}).

It remains to estimate the quantum computational complexity of
Algorithm~\ref{algQAE}, understood as the number of applications~$M$ of the
unitary operator~$\mathbf{Q}$. Namely,%
\begin{align*}
M  &  =\sum_{i=0}^{I}\frac{K_{i}-1}{2}N_{i}<\frac{1}{2}\sum_{i=0}^{I}%
K_{i}N_{i}=\frac{1}{4E^{2}}\sum_{i=0}^{I}K_{i}\ln\left(  \frac{2}{\alpha_{i}%
}\right)  +\frac{1}{2}\sum_{i=0}^{I}K_{i}\\
&  =\frac{1}{4E^{2}}\sum_{i=0}^{I}K_{i}\ln\left(  \frac{2}{C\alpha\varepsilon
K_{i}}\right)  +\frac{1}{2}\sum_{i=0}^{I}K_{i}%
\end{align*}
since%
\[
N_{i}=\left\lceil \frac{1}{4E^{2}}\ln\frac{2}{\alpha_{i}}\right\rceil
\leq\frac{1}{4E^{2}}\ln\frac{2}{\alpha_{i}}+1.
\]
Observe that $x\ln\frac{c}{x}$, where~$c$ is a positive constant, is an
increasing function of $x\in(0,c/e)$. By~(\ref{Eq:nf756af344}),
(\ref{Eq:7ag45ar}), and (\ref{Eq:bc645qfsaa}),%
\[
K_{i}\leq3^{-I+1+i}\frac{F}{\varepsilon}\leq\frac{F}{\varepsilon}<\frac
{2}{C\varepsilon}\frac{1}{e}<\frac{2}{C\alpha\varepsilon}\frac{1}{e}%
\]
for each $i=0,1,\ldots,I-1$, and%
\[
K_{I}<\frac{\pi}{4\varepsilon}<\frac{2}{C\varepsilon}\frac{1}{e}<\frac
{2}{C\alpha\varepsilon}\frac{1}{e}.
\]
As a result,%
\begin{align*}
M  &  <\frac{1}{4E^{2}}%
\bigg(%
\sum_{i=0}^{I-1}K_{i}\ln\left(  \frac{2}{C\alpha\varepsilon K_{i}}\right)
+K_{I}\ln\left(  \frac{2}{C\alpha\varepsilon K_{I}}\right)
\bigg)%
+\frac{1}{2}\sum_{i=0}^{I-1}K_{i}+\frac{1}{2}K_{I}\\
&  <\frac{1}{4E^{2}}\frac{1}{\varepsilon}%
\bigg(%
F\sum_{i=0}^{I-1}3^{-I+1+i}\ln\left(  \frac{2}{C\alpha3^{-I+1+i}F}\right)
+\frac{\pi}{4}\ln\left(  \frac{8}{C\alpha\pi}\right)
\bigg)%
\\
&  \quad+\frac{1}{\varepsilon}%
\bigg(%
\frac{1}{2}F\sum_{i=0}^{I-1}3^{-I+1+i}+\frac{\pi}{8}%
\bigg)%
\\
&  =\frac{1}{4E^{2}}\frac{1}{\varepsilon}%
\bigg(%
F\sum_{j=0}^{I-1}3^{-j}\ln\left(  \frac{2}{C\alpha F}\right)  +F\sum
_{j=0}^{I-1}j3^{-j}\ln\left(  3\right)  +\frac{\pi}{4}\ln\left(  \frac
{8}{C\alpha\pi}\right)
\bigg)%
\\
&  \quad+\frac{1}{\varepsilon}%
\bigg(%
\frac{1}{2}F\sum_{j=0}^{I-1}3^{-j}+\frac{\pi}{8}%
\bigg)%
.
\end{align*}
Since%
\[
\sum_{j=0}^{I-1}3^{-j}<\sum_{j=0}^{\infty}3^{-j}=\frac{3}{2}\quad
\text{and}\quad\sum_{j=0}^{I-1}j3^{-j}<\sum_{j=0}^{\infty}j3^{-j}=\frac{3}%
{4},
\]
we finally get%
\begin{align*}
M  &  <\frac{1}{\varepsilon}\left(  \frac{1}{4E^{2}}\left(  \frac{3}{2}%
F\ln\left(  \frac{2}{C\alpha F}\right)  +\frac{3}{4}F\ln\left(  3\right)
+\frac{\pi}{4}\ln\left(  \frac{8}{C\alpha\pi}\right)  \right)  +\frac{3}%
{4}F+\frac{\pi}{8}\right) \\
&  \approx\frac{1}{\varepsilon}\left(  85.637-55.674\ln\left(  \alpha\right)
\right)  ,
\end{align*}
which proves that $M=O(\frac{1}{\varepsilon})$.
\end{proof}

\begin{remark}
\label{Rem:bbdta54v}\upshape It is interesting to note that the choice
of~$\alpha_{i}$ as in~(\ref{Eq:nc746ardf5}) is optimal in the following sense.
Suppose that $\alpha_{i}$ is given by a more general expression of the form%
\[
\alpha_{i}=C_{x}\alpha\left(  \varepsilon K_{i}\right)  ^{x}%
\]
for some $x>0$. Then%
\begin{align*}
\sum_{i=0}^{I}\alpha_{i}  &  =C_{x}\alpha\sum_{i=0}^{I}\left(  \varepsilon
K_{i}\right)  ^{x}\leq C_{x}\alpha%
\bigg(%
F^{x}\sum_{i=0}^{I-1}3^{\left(  -I+1+i\right)  x}+\left(  \frac{\pi}%
{4}\right)  ^{x}%
\bigg)%
\\
&  <C_{x}\alpha%
\bigg(%
\frac{F^{x}}{1-3^{-x}}+\left(  \frac{\pi}{4}\right)  ^{x}%
\bigg)%
=C_{x}\alpha%
\bigg(%
\frac{3^{x}4^{x}F^{x}+\left(  3^{x}-1\right)  \pi^{x}}{\left(  3^{x}-1\right)
4^{x}}%
\bigg)%
=\alpha
\end{align*}
if we take%
\[
C_{x}=\frac{\left(  3^{x}-1\right)  4^{x}}{3^{x}4^{x}F^{x}+\left(
3^{x}-1\right)  \pi^{x}}.
\]
Estimating the number of applications of~$\mathbf{Q}$ in a similar manner as
in the proof of Theorem~\ref{Thm:2and6at3l} gives%
\[
M<\frac{1}{\varepsilon}%
\bigg(%
\frac{1}{4E^{2}}%
\bigg(%
\frac{3}{2}F\ln%
\bigg(%
\frac{2\sqrt{3^{x}}}{C_{x}\alpha F^{x}}%
\bigg)%
+\frac{\pi}{4}\ln%
\bigg(%
\frac{2^{2x+1}}{C_{x}\alpha\pi^{x}}%
\bigg)%
\bigg)%
+\frac{3}{4}F+\frac{\pi}{8}%
\bigg)%
.
\]
This upper bound for~$M$ attains its minimum value when $x=1$, that is,
$\alpha_{i}$~given by~(\ref{Eq:nc746ardf5}) with~(\ref{Eq:7ag45ar}) turns out
to be the best choice.
\end{remark}

Next, we present an accelerated version of our algorithm; see
Algorithm~\ref{algQAEacc}. We shall refer to it as the accelerated QAE (AQAE)
algorithm. Theorem~\ref{Thm:p9a3n7gd} shows the correctness of the algorithm
and provides an upper bound for quantum computational complexity.%

\begin{algorithm}
\caption{}\label{algQAEacc}
\begin{algorithmic}[1] 
\Require $\varepsilon>0$ and $\alpha\in(0,1)$
\State $E:=\frac{1}{2}\left(\sin^2\left(\frac{3}{7}\frac{\pi}{2}\right)
    -\sin^2\left(\frac{1}{3}\frac{\pi}{2}\right)\right)$
\State $C:=\frac{8}{3\pi}$
\State $i:=-1,\quad K_0:=1,\quad m_0:=0$
\Repeat
    \State $i:=i+1$
    \State $\alpha_i:=C\alpha\varepsilon K_i,\quad
        N_i:=\left\lceil\frac{1}{2E^2}\ln\frac{2}{\alpha_i}\right\rceil$
    \State $N:=0,\quad n:=0$
    \Repeat
        \State $N:=N+1,\quad n:=n+\texttt{\textit{QuantCirc}}( \mathbf{Q}^{(K_i-1)/2},1)$
        \State $\hat{A}_i^N:=\frac{n}{N}$
        \State $E_i^N:=\sqrt{\frac{1}{2N}\ln\frac{2}{\alpha_i}}$ if $N<N_i$,
            and $E_i^N:=E$ otherwise
        \State compute $\hat{\theta}^{\flat}_i$ and $\hat{\theta}^{\sharp}_i$ such that
            \Statex \quad\quad\quad\quad\quad
            $K_i\hat{\theta}^{\flat}_i,K_i\hat{\theta}^{\sharp}_i\in
            \big[m_i\frac{\pi}{2},(m_i+1)\frac{\pi}{2}\big]$
            \Statex \quad\quad\quad\quad\quad and\quad
            $\sin^2(K_i\hat{\theta}^{\flat}_i)=\max\{\hat{A}_i^N-E_i^N,0\}$
            \Statex \quad\quad\quad\quad\quad and\quad
            $\sin^2(K_i\hat{\theta}^{\sharp}_i)=\min\{\hat{A}_i^N+E_i^N,1\}$
    \Until{found $L_i\in\{3,5,7\}$ and $m_{i+1}\in L_im_i+\{0,\ldots,L_i-1\}$ such that}
        \Statex \quad\quad\quad
        $[L_iK_i\hat{\theta}^{\flat}_i,L_iK_i\hat{\theta}^{\sharp}_i]\subset
        \big[m_{i+1}\frac{\pi}{2},(m_{i+1}+1)\frac{\pi}{2}\big]$
    \State $K_{i+1}:=L_iK_i$
\Until{$|\hat{\theta}^{\sharp}_i-\hat{\theta}^{\flat}_i|\le2\varepsilon$}
\State $I:=i$
\State \textbf{return} $\hat{a}:=\sin^2(\frac{1}2{}(\hat{\theta}^{\flat}_I+\hat{\theta}^{\sharp}_I))$
\end{algorithmic}
\end{algorithm}%

\begin{theorem}
\label{Thm:p9a3n7gd}\upshape Given an accuracy $\varepsilon>0$ and confidence
level $1-\alpha\in(0,1)$, Algorithm~\ref{algQAEacc} returns an estimate~$\hat
{a}$ of~$a$ such that%
\begin{equation}
\mathbb{P}\{\left\vert \hat{a}-a\right\vert \leq\varepsilon\}\geq1-\alpha.
\label{Eq:wpe8an1}%
\end{equation}
The quantum computational complexity of Algorithm~\ref{algQAEacc}, that is,
the number of applications of~$\mathbf{Q}$ also scales as $M=O(\frac
{1}{\varepsilon})$.
\end{theorem}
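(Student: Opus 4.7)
The plan is to adapt the proof of Theorem~\ref{Thm:2and6at3l} to handle the inner \textbf{Repeat} loop, which terminates adaptively at a random step $N_i^{\ast}$ rather than at the deterministic $N_i$. As before, set $a = \sin^{2}\theta$ and $A_{i} = \sin^{2}(K_{i}\theta)$. The first observation is that the inner loop is guaranteed to terminate by $N = N_{i}$: once $N$ reaches $N_{i}$ we have $E_{i}^{N} = E$, the width of $[K_{i}\hat{\theta}_{i}^{\flat}, K_{i}\hat{\theta}_{i}^{\sharp}]$ is at most $2F$, and Lemma~\ref{Lem:bddta7d7gs} then produces the required $L_{i}$ and $m_{i+1}$. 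Define $B_{i} = \{|\hat{A}_{i}^{N_{i}^{\ast}} - A_{i}| \leq E_{i}^{N_{i}^{\ast}}\}$ and $B = B_{0} \cap \cdots \cap B_{I}$. The key technical point is to show that $\mathbb{P}(B_{i} \mid B_{0} \cap \cdots \cap B_{i-1}) \geq 1 - \alpha_{i}$ with $\alpha_{i} = C\alpha\varepsilon K_{i}$; because $N_{i}^{\ast}$ is data-dependent, the pointwise Hoeffding bound used in Theorem~\ref{Thm:2and6at3l} must be upgraded to a time-uniform concentration bound for the centered Bernoulli sum $n_{N} - N A_{i}$ over $N = 1, \ldots, N_{i}$, obtained by applying Doob's maximal inequality to the exponential supermartingale underlying Hoeffding's inequality.

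The new constant $C = 8/(3\pi)$ (in place of $4/(6F + \pi)$) comes from a different bound on $\sum K_{i}$. The inner-loop stopping condition $[L_{i}K_{i}\hat{\theta}_{i}^{\flat}, L_{i}K_{i}\hat{\theta}_{i}^{\sharp}] \subset [m_{i+1}\tfrac{\pi}{2}, (m_{i+1}+1)\tfrac{\pi}{2}]$ forces $K_{i+1}(\hat{\theta}_{i}^{\sharp} - \hat{\theta}_{i}^{\flat}) \leq \pi/2$; combined with $\hat{\theta}_{i}^{\sharp} - \hat{\theta}_{i}^{\flat} > 2\varepsilon$ for $i < I$ (outer loop not yet triggered), this gives $K_{i+1} < \pi/(4\varepsilon)$ for all $i = 0, \ldots, I - 1$ and hence $K_{I} < \pi/(4\varepsilon)$. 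Since $L_{i} \geq 3$, one has $K_{i} \leq 3^{-(I-i)} K_{I}$, so
\[
\sum_{i=0}^{I} K_{i} \leq K_{I} \sum_{j=0}^{I} 3^{-j} < \tfrac{3}{2} K_{I} < \frac{3\pi}{8\varepsilon},
\]
and therefore $\sum_{i} \alpha_{i} = C\alpha\varepsilon \sum_{i} K_{i} < \alpha$ precisely when $C = 8/(3\pi)$. The bound $\mathbb{P}(B) \geq 1 - \alpha$ then follows from the product estimate on conditional probabilities exactly as in Theorem~\ref{Thm:2and6at3l}.

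The inductive argument that $\theta \in [\hat{\theta}_{i}^{\flat}, \hat{\theta}_{i}^{\sharp}]$ on $B$ proceeds exactly as in Theorem~\ref{Thm:2and6at3l}, with $E_{i}^{N_{i}^{\ast}}$ substituted for $E$; the outer stopping criterion $|\hat{\theta}_{I}^{\sharp} - \hat{\theta}_{I}^{\flat}| \leq 2\varepsilon$ then yields $|\hat{a} - a| \leq \varepsilon$ on $B$, proving~(\ref{Eq:wpe8an1}). For the complexity, $M = \sum_{i=0}^{I} N_{i}^{\ast}(K_{i} - 1)/2 \leq \tfrac{1}{2}\sum_{i} N_{i} K_{i}$ since $N_{i}^{\ast} \leq N_{i}$, and inserting $N_{i} \leq \frac{1}{2E^{2}} \ln(2/\alpha_{i}) + 1$ together with the bounds on $K_{i}$ leads to $M = O(1/\varepsilon)$ by the same geometric-series computation as in Theorem~\ref{Thm:2and6at3l}. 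The principal obstacle is establishing the time-uniform Hoeffding-type bound for $\mathbb{P}(B_{i} \mid B_{0} \cap \cdots \cap B_{i-1}) \geq 1 - \alpha_{i}$ at the random stopping time $N_{i}^{\ast}$; once this is secured, the rest of the argument is a direct adaptation of the proof of Theorem~\ref{Thm:2and6at3l}.
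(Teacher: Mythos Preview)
Your outline matches the paper's proof in its overall structure: the bounds $K_{I}<\pi/(4\varepsilon)$ and $K_{i}\leq 3^{-(I-i)}K_{I}$, the choice $C=8/(3\pi)$, the induction establishing $\theta\in[\hat{\theta}_{i}^{\flat},\hat{\theta}_{i}^{\sharp}]$, and the complexity estimate via $N_{i}^{\ast}\leq N_{i}$ are all exactly as in the paper. The one substantive divergence is in how the random inner-loop stopping time is handled.

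The paper does \emph{not} attempt a time-uniform concentration bound. Instead it introduces a pair of events: $B_{i}$ is the event that the inner loop runs all the way to iteration $N_{i}$ \emph{and} $|\hat{A}_{i}^{N_{i}}-A_{i}|\leq E$, while $B_{i}'$ is the event that at the actual termination index $N_{i}'$ one has $|\hat{A}_{i}^{N_{i}'}-A_{i}|\leq E_{i}^{N_{i}'}$. Since $E_{i}^{N_{i}}=E$, the paper has $B_{i}\subset B_{i}'$, applies the ordinary (pointwise) Hoeffding inequality at the \emph{deterministic} sample size $N_{i}$ to obtain $\mathbb{P}(B_{i}\mid B_{0}\cap\cdots\cap B_{i-1})\geq 1-\alpha_{i}$, and runs the induction on $B=\bigcap_{i}B_{i}\subset B'=\bigcap_{i}B_{i}'$. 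Your route---bounding $\mathbb{P}(B_{i}')$ directly via a time-uniform inequality---is a natural idea, but the tool you cite does not deliver the needed bound: Doob's maximal inequality applied to the Hoeffding supermartingale $\exp(\lambda S_{N}-N\lambda^{2}/8)$ with a \emph{fixed} parameter $\lambda$ controls only linear boundaries of the form $S_{N}\leq aN+b$, whereas the curve $N\mapsto N\,E_{i}^{N}=\sqrt{(N/2)\ln(2/\alpha_{i})}$ that you must stay under is concave. Controlling such a square-root boundary uniformly in $N$ requires either the method of mixtures or a peeling/stitching argument, and either one would inflate the failure probability beyond $\alpha_{i}$ and force worse constants than $C=8/(3\pi)$. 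The paper sidesteps this difficulty entirely by anchoring the concentration step at the fixed time $N_{i}$ rather than at the random $N_{i}^{\ast}$.
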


\begin{proof}
Fix an $a\in\left[  0,1\right]  $ and the angle $\theta\in\left[  0,\frac{\pi
}{2}\right]  $ such that%
\[
a=\sin^{2}(\theta).
\]
For each $i=0,1,\ldots~$, we put%
\[
A_{i}=\sin^{2}(K_{i}\theta).
\]
We also out%
\[
\hat{A}_{i}=\hat{A}_{i}^{N_{i}^{\prime}}%
\]
where $i$ is the iteration index of the outer \textbf{repeat} loop in
Algorithm~\ref{algQAEacc} and $N_{i}^{\prime}$ denotes the iteration inded~$N$ of
the inner \textbf{repeat} loop at which the inner loop terminates, and where%
\begin{equation}
\hat{A}_{i}^{N}=\frac{n}{N} \label{Eq:oofyrhste1}%
\end{equation}
with%
\begin{equation}
n=\textstyle%
\texttt{\textit{QuantCirc}}%
(\frac{K_{i}-1}{2},N). \label{Eq:bba5fes3}%
\end{equation}
It follows that~$n$ can be regarded as the number of successful outcomes
in~$N$ i.i.d.\ Bernoulli trials with probability of success~$A_{i}$. Moreover,
we put%
\[
E_{i}=E_{i}^{N_{i}^{\prime}},
\]
where%
\begin{equation}
E_{i}^{N}=\left\{
\begin{array}
[c]{cl}%
\sqrt{\frac{1}{2N}\ln\frac{2}{\alpha_{i}}} & \text{if }%
N<N_{i}\text{,}\\
E & \text{otherwise,}%
\end{array}
\right.  \label{Eq:2darrw39k}%
\end{equation}
and%
\begin{equation}
\alpha_{i}=C\alpha\varepsilon K_{i}, \label{Eq:57ahd8anvx}%
\end{equation}
with the constant~$C$ defined by~(\ref{Eq:po0in92bav}).

For the sake of this argument, suppose that the outer \textbf{repeat} loop in
Algorithm~\ref{algQAEacc} does not break when the condition $|\hat{\theta}%
_{i}^{\sharp}-\hat{\theta}_{i}^{\flat}|\leq2\varepsilon$ is satisfied, but
keeps running indefinitely. This defines $\alpha_{i},N_{i}%
,N_{i}^{\prime},\hat{A}_{i},E_{i},K_{i},\hat{\theta}_{i}^{\sharp},\hat{\theta
}_{i}^{\sharp},L_{i},m_{i}$ for each $i=0,1,\ldots$~. Then, $I$ can be taken
to be the smallest~$i=0,1,\ldots$ satisfying the condition $|\hat{\theta}%
_{i}^{\sharp}-\hat{\theta}_{i}^{\flat}|\leq2\varepsilon$.

Let%
\[
B=\textstyle\bigcap_{i=0}^{I}B_{i}\quad\text{and}\quad B^{\prime
}=\textstyle\bigcap_{i=0}^{I}B_{i}^{\prime},
\]
where, for each $i=0,1,\ldots\,$,\thinspace we denote by$~B_{i}$ the event
that the inner \textbf{repeat} loop in Algorithm~\ref{algQAEacc} terminates at
the iteration~$N=N_{i}^{\prime}=N_{i}$ and produces~$\hat{A}_{i}$ at termination
such that%
\[
|\hat{A}_{i}-A_{i}|\leq E,
\]
and by~$B_{i}^{\prime}$ we denote the event that the inner \textbf{repeat}
loop in Algorithm~\ref{algQAEacc} terminates at an iteration~$N=N_{i}^{\prime
}\leq N_{i}$ and produces~$\hat{A}_{i}$ at termination such that
\[
|\hat{A}_{i}-A_{i}|\leq E_{i},
\]
where~$E$ is given by~(\ref{Eq:nnc74tsfa}) and%
\begin{equation}
N_{i}=\left\lceil \frac{1}{2E^{2}}\ln\frac{2}{\alpha_{i}}\right\rceil
. \label{Eq:nyedftard74}%
\end{equation}
Since $E_{i}=E$ when the inner loop terminates at iteration $N=N_{i}^{\prime
}=N_{i}$, we have $B_{i}\subset B_{i}^{\prime}$ for each $i=0,1,\ldots\,$,
which implies that $B\subset B^{\prime}$.

By Hoeffding's inequality~\cite{Hoeff} for~$N_{i}$ i.i.d.$\ $Benoulli trials
applied to the probability of~$B_{i}$ conditioned on the algorithm outcomes
for rounds $0,1,\ldots,i-1$ of the outer \textbf{repeat} loop, we have%
\[
\mathbb{P}(B_{i}|B_{0}\cap\cdots\cap B_{i-1})\geq1-2e^{-2N_{i}E^{2}}%
\geq1-\alpha_{i}%
\]
for each $i=0,1,\ldots\,$.

Observe that we have $|K_{I}\hat{\theta}_{I-1}^{\sharp}-K_{I}\hat{\theta}_{I-1}%
^{\flat}|\leq\frac{\pi}{2}$ since $K_{I}\hat{\theta}_{I-1}^{\sharp},K_{I}%
\hat{\theta}_{I-1}^{\flat}\in\left[  m_{I}\frac{\pi}{2},(m_{I}+1)\frac{\pi}%
{2}\right]  $. Moreover $|\hat{\theta}_{I-1}^{\sharp}-\hat{\theta}_{I-1}%
^{\flat}|>2\varepsilon$ since $I$ is the smallest $i=0,1,\ldots$ satisfying
the condition $|\hat{\theta}_{i}^{\sharp}-\hat{\theta}_{i}^{\flat}%
|\leq2\varepsilon$. It follows that%
\[
K_{I}\leq\frac{\pi}{2|\hat{\theta}_{I-1}^{\sharp}-\hat{\theta}_{I-1}^{\flat}%
|}<\frac{\pi}{4\varepsilon}.
\]
Then, for each $i=0,1,\ldots,I$,%
\begin{equation}
K_{i}\leq\frac{L_{I-1}}{3}\times\cdots\times\frac{L_{i}}{3}K_{i}=3^{-I+i}K_{I}%
<3^{-I+i}\frac{\pi}{4\varepsilon} \label{Eq:as7bron64}%
\end{equation}
since $L_{i},\ldots,L_{I-1}\in\left\{  3,5,7\right\}  $. It follows that%
\begin{align}
\sum_{i=0}^{I}\alpha_{i}  &  =C\alpha\varepsilon\sum
_{i=0}^{I}K_{i}<C\alpha\frac{\pi}{4}\sum_{i=0}^{I}3^{-I+i}=C\alpha\frac{\pi}{4}\sum_{j=0}^{I}3^{-j}\nonumber\\
&  <C\alpha\frac{\pi}{4}\sum_{j=0}^{\infty}3^{-j}=C%
\alpha\frac{3\pi}{8}=\alpha\label{Eq:3ga63bap}%
\end{align}
when%
\begin{equation}
C=\frac{8}{3\pi}\approx0.84883. \label{Eq:po0in92bav}%
\end{equation}

Because the events $B_{0},B_{1},\ldots$ are independent of~$I$, it follows
that the conditional probability $\mathbb{P}(B|I)$ satisfies%
\[
\mathbb{P}(B|I)=\prod_{i=0}^{I}\mathbb{P}(B_{i}|B_{0}\cap\cdots\cap
B_{i-1})\geq\prod_{i=0}^{I}\left(  1-\alpha_{i}\right)  \geq1-\sum_{i=0}%
^{I}\alpha_{i}>1-\alpha,
\]
so%
\begin{equation}
\mathbb{P}(B^{\prime})\geq\mathbb{P}(B)=\mathbb{E}\left(  \mathbb{P}%
(B|I)\right)  \geq1-\alpha\label{Eq:nna64tq54po}%
\end{equation}
since $B\subset B^{\prime}$.

Next, we claim that%
\begin{equation}
\theta\in\lbrack\hat{\theta}_{i}^{\flat},\hat{\theta}_{i}^{\sharp}%
]\quad\text{on }B\text{ for each }i=0,1,\ldots,I. \label{Eq:lke5aveo}%
\end{equation}
To verify the claim, let us assume that the outcome of
Algorithm~\ref{algQAEacc} is in~$B$, and proceed by induction. For $i=0$, we
have $K_{0}=0$ and $m_{0}=0$, so $\theta,\hat{\theta}_{0}^{\flat},\hat{\theta
}_{0}^{\sharp}\in\left[  0,\frac{\pi}{2}\right]  $. Since $|\hat{A}_{0}%
-A_{0}|\leq E_{0}\ $on~$B_{0}^{\prime}$, and therefore also on $B\subset
B^{\prime}\subset B_{0}^{\prime}$, it means that%
\begin{align*}
\sin^{2}(\theta)=A_{0}  &  \in\lbrack\hat{A}_{0}-E_{0},\hat{A}_{0}+E_{0}%
]\cap\left[  0,1\right] \\
&  =[\max\{\hat{A}_{0}-E_{0},0\},\min\{\hat{A}_{0}+E_{0},1\}]\\
&  =[\sin^{2}(\hat{\theta}_{0}^{\flat}),\sin^{2}(\hat{\theta}_{0}^{\sharp})]
\end{align*}
when the inner loop terminates, so $\theta\in\lbrack\hat{\theta}_{0}^{\flat
},\hat{\theta}_{0}^{\sharp}]$. Now suppose that $\theta\in\lbrack\hat{\theta
}_{i}^{\flat},\hat{\theta}_{i}^{\sharp}]$ for some $i=0,1,\ldots,I-1$.
According to Algorithm~\ref{algQAEacc},%
\begin{equation}
\textstyle K_{i}\hat{\theta}_{i}^{\flat},K_{i}\hat{\theta}_{i}^{\sharp}%
\in\left[  m_{i}\frac{\pi}{2},\left(  m_{i}+1\right)  \frac{\pi}{2}\right]
\label{Eq:65af1vzjl}%
\end{equation}
and
\begin{align}
\sin^{2}(K_{i}\hat{\theta}_{i}^{\flat})  &  =\max(\hat{A}_{i}-E_{i}%
,0),\label{Eq:884ha846}\\
\sin^{2}(K_{i}\hat{\theta}_{i}^{\sharp})  &  =\min(\hat{A}_{i}+E_{i},1)
\label{Eq:3ndta43vd}%
\end{align}
when the inner loop terminates. If no $L_{i}\in\left\{  3,5,7\right\}  $ and
$m_{i+1}\in L_{i}m_{i}+\{0,\ldots,L_{i}-1\}$ such that%
\[
\textstyle[L_{i}K_{i}\hat{\theta}_{i}^{\flat},L_{i}K_{i}\hat{\theta}%
_{i}^{\sharp}]\subset\left[  m_{i+1}\frac{\pi}{2},\left(  m_{i+1}+1\right)
\frac{\pi}{2}\right]
\]
has been found for any of the iterations of the inner loop in
Algorithm~\ref{algQAEacc} with index~$N<N_{i}$, then
Lemma~\ref{Lem:bddta7d7gs} ensures that such an~$L_{i}$ and~$m_{i+1}$ can be
found for the iteration of the inner loop with index $N=N_{i}$. This is
because, according to~(\ref{Eq:2darrw39k}), $E_{i}=E_{i}^{N_{i}}=E$ when
$N=N_{i}$.\ With $K_{i+1}=L_{i}K_{i}$, it follows by the induction hypothesis
$\theta\in\lbrack\hat{\theta}_{i}^{\flat},\hat{\theta}_{i}^{\sharp}]$ that%
\begin{equation}
\textstyle K_{i+1}\theta\in\lbrack K_{i+1}\hat{\theta}_{i}^{\flat},K_{i+1}%
\hat{\theta}_{i}^{\sharp}]\subset\left[  m_{i+1}\frac{\pi}{2},\left(
m_{i+1}+1\right)  \frac{\pi}{2}\right]  . \label{Eq:hhf645n7}%
\end{equation}
From Algorithm~\ref{algQAEacc} we have%
\begin{equation}
\textstyle K_{i+1}\hat{\theta}_{i+1}^{\flat},K_{i+1}\hat{\theta}_{i+1}%
^{\sharp}\in\left[  m_{i+1}\frac{\pi}{2},\left(  m_{i+1}+1\right)  \frac{\pi
}{2}\right]  . \label{Eq:a5ssfa53nl}%
\end{equation}
We know that $|\hat{A}_{i+1}-A_{i+1}|\leq E_{i+1}\ $on~$B_{i+1}^{\prime}$, hence
on $B\subset B_{i+1}\subset B_{i+1}^{\prime}$, which means that%
\begin{align*}
\sin^{2}(K_{i+1}\theta)=A_{i+1}  &  \in\lbrack\hat{A}_{i+1}-E_{i+1},\hat
{A}_{i+1}+E_{i+1}]\cap\left[  0,1\right] \\
&  =[\max\{\hat{A}_{i+1}-E_{i+1},0\},\min\{\hat{A}_{i+1}+E_{i+1},1\}]\\
&  =[\sin^{2}(K_{i+1}\hat{\theta}_{i+1}^{\flat}),\sin^{2}(K_{i+1}\hat{\theta
}_{i+1}^{\sharp})]
\end{align*}
when the inner loop terminates. Together with~(\ref{Eq:hhf645n7})
and~(\ref{Eq:a5ssfa53nl}), this implies that%
\[
K_{i+1}\theta\in\lbrack K_{i+1}\hat{\theta}_{i+1}^{\flat},K_{i+1}\hat{\theta
}_{i+1}^{\sharp}],
\]
so%
\[
\theta\in\lbrack\hat{\theta}_{i+1}^{\flat},\hat{\theta}_{i+1}^{\sharp}],
\]
completing the proof of the claim.

With%
\begin{equation}
\hat{a}=\textstyle\sin^{2}(\frac{1}{2}(\hat{\theta}_{I}^{\flat}+\hat{\theta
}_{I}^{\sharp})), \label{Eq:ooijhavv23s}%
\end{equation}
it follows that%
\begin{align*}
|\hat{a}-a|  &  =\textstyle|\sin^{2}(\frac{1}{2}(\hat{\theta}_{I}^{\flat}%
+\hat{\theta}_{I}^{\sharp}))-\sin^{2}(\theta)|\\
&  \leq\textstyle|\frac{1}{2}(\hat{\theta}_{I}^{\flat}+\hat{\theta}%
_{I}^{\sharp})-\theta|\leq\frac{1}{2}|\hat{\theta}_{I}^{\sharp}-\hat{\theta
}_{I}^{\flat}|\leq\varepsilon\quad\text{on }B.
\end{align*}
The first inequality holds because $|\sin^{2}\alpha-\sin^{2}\beta|\leq
|\alpha-\beta|$, the second one because $\theta\in\lbrack\hat{\theta}%
_{I}^{\flat},\hat{\theta}_{I}^{\sharp}]$ on~$B$ by~(\ref{Eq:lke5aveo}), and
the last one because~$I$ is the smallest $i=0,1,2,\ldots$ such that%
\begin{equation}
|\hat{\theta}_{i}^{\sharp}-\hat{\theta}_{i}^{\flat}|\leq2\varepsilon.
\label{Eq:ffle5a43b}%
\end{equation}
Together with (\ref{Eq:nna64tq54po}), this yields%
\[
\mathbb{P}\{|\hat{a}-a|\leq\varepsilon\}\geq\mathbb{P}(B)\geq1-\alpha,
\]
proving (\ref{Eq:wpe8an1}).

It remains to estimate the quantum computational complexity of
Algorithm~\ref{algQAEacc}, that is, the number~$M$ of applications of the unitary
operator~$\mathbf{Q}$. Namely,%
\begin{align*}
M  &  =\sum_{i=0}^{I}\frac{K_{i}-1}{2}N_{i}^{\prime}<\frac{1}{2}\sum_{i=0}%
^{I}K_{i}N_{i}<\frac{1}{4E^{2}}\sum_{i=0}^{I}K_{i}\ln\left(  \frac{2}%
{\alpha_{i}}\right)  +\frac{1}{2}\sum_{i=0}^{I}K_{i}\\
&  =\frac{1}{4E^{2}}\sum_{i=0}^{I}K_{i}\ln\left(  \frac{2}{C%
\alpha\varepsilon K_{i}}\right)  +\frac{1}{2}\sum_{i=0}^{I}K_{i}%
\end{align*}
since%
\[
N_{i}^{\prime}\leq N_{i}=\left\lceil \frac{1}{2E^{2}}\ln\frac{2}{\alpha
_{i}}\right\rceil <\frac{1}{2E^{2}}\ln\frac{2}{\alpha_{i}%
}+1.
\]
Observe that $x\ln\frac{c}{x}$, where~$c$ is a positive constant, is an
increasing function of $x\in(0,c/e)$. By~(\ref{Eq:as7bron64}),%
\[
K_{i}<3^{-I+i}\frac{\pi}{4\varepsilon}\leq\frac{\pi}{4\varepsilon}<\frac
{2}{C\varepsilon}\frac{1}{e}<\frac{2}{C\alpha\varepsilon
}\frac{1}{e}%
\]
for each $i=0,1,\ldots,I$. As a result,%
\begin{align*}
M  &  <\frac{1}{4E^{2}}\frac{\pi}{4\varepsilon}\sum_{i=0}^{I}3^{-I+i}%
\ln\left(  \frac{8}{C^\alpha3^{-I+i}\pi}\right)  +\frac{\pi
}{8\varepsilon}\sum_{i=0}^{I}3^{-I+i}\\
&  =\frac{1}{4E^{2}}\frac{\pi}{4\varepsilon}\sum_{j=0}^{I}3^{-j}\ln\left(
\frac{8}{C\alpha3^{-j}\pi}\right)  +\frac{\pi}{8\varepsilon}%
\sum_{j=0}^{I}3^{-j}\\
&  <\frac{1}{4E^{2}}\frac{\pi}{4\varepsilon}\sum_{j=0}^{\infty}3^{-j}%
\ln\left(  \frac{8}{C\alpha3^{-j}\pi}\right)  +\frac{\pi
}{8\varepsilon}\sum_{j=0}^{\infty}3^{-j}\\
&  =\frac{1}{4E^{2}}\frac{\pi}{4\varepsilon}\left(  \sum_{j=0}^{\infty}%
3^{-j}\ln\left(  \frac{8}{C\alpha\pi}\right)  +\sum_{j=0}^{\infty
}3^{-j}j\ln\left(  3\right)  \right)  +\frac{\pi}{8\varepsilon}\sum
_{j=0}^{\infty}3^{-j}%
\end{align*}
Since%
\[
\sum_{j=0}^{I-1}3^{-j}<\sum_{j=0}^{\infty}3^{-j}=\frac{3}{2}\quad
\text{and}\quad\sum_{j=0}^{I-1}j3^{-j}<\sum_{j=0}^{\infty}j3^{-j}=\frac{3}%
{4},
\]
we finally get%
\[
M<\frac{1}{\varepsilon}\left(  \frac{1}{4E^{2}}\frac{\pi}{4}\left(  \frac
{3}{2}\ln\left(  \frac{8}{C\alpha\pi}\right)  +\frac{3}{4}\ln\left(
3\right)  \right)  +\frac{3\pi}{16}\right)  =O\left(  \frac{1}{\varepsilon
}\right)  .
\]%
\end{proof}

\begin{remark}
\label{Rem:awd6a54a}\upshape Just like observed in Remark~\ref{Rem:bbdta54v},
in the case of Algorithm~\ref{algQAEacc} the choice of~$\alpha_{i}$
given by~(\ref{Eq:57ahd8anvx}) is optimal in the following sense. Suppose
that~$\alpha_{i}$ is given by an expression of the form%
\[
\alpha_{i}=C_{x}\alpha(\varepsilon K_{i})^{x}%
\]
for some $x>0$. Then%
\begin{align*}
\sum_{i=0}^{I}\alpha_{i}  &  =C_{x}\alpha\sum_{i=0}^{I}\left(
\varepsilon K_{i}\right)  ^{x}<C_{x}\alpha\sum_{i=0}^{I}\left(
3^{-I+i}\frac{\pi}{4}\right)  ^{x}<C_{x}\alpha\left(  \frac{\pi}%
{4}\right)  ^{x}\sum_{j=0}^{\infty}3^{-jx}\\
&  =C_{x}\alpha\left(  \frac{\pi}{4}\right)  ^{x}\frac{1}{1-3^{-x}%
}=C_{x}\alpha\frac{\pi^{x}}{4^{x}\left(  1-3^{-x}\right)  }=\alpha
\end{align*}
if we take%
\[
C_{x}=\frac{4^{x}\left(  1-3^{-x}\right)  }{\pi^{x}}.
\]
Then the number of applications of~$\mathbf{Q}$ can be estimated as%
\[
M<\frac{1}{\varepsilon}\left(  \frac{1}{4E^{2}}\frac{\pi}{4}\frac{3}{2}%
\ln\left(  \frac{2\sqrt{3^{x}}}{\left(  1-3^{-x}\right)  \alpha}\right)
+\frac{3\pi}{16}\right)  .
\]
This upper bound for~$M$ attains its minimum when $x=1$, meaning
that~$\alpha_{i}$ given by~(\ref{Eq:57ahd8anvx})
with~(\ref{Eq:po0in92bav}) is the best choice.
\end{remark}

\section{Auxiliary results and notation\label{Sect:AuxRes}}

For any $a\in\left[  0,1\right]  $ and $L\in\left\{  3,5,7,\ldots\right\}  $,
we put%
\[
e_{L}(a)=\min_{l=1,\ldots,L-1}\left\vert a-\sin^{2}\left(  \frac{l}{L}%
\frac{\pi}{2}\right)  \right\vert .
\]
Moreover, let%
\begin{equation}
e(a)=\max_{L=3,5,7,\ldots}e_{L}(a). \label{eq:hdf76a55a}%
\end{equation}
The function $e(a)$ is shown in Figure~\ref{Fig:hfdtsds}, along with
$e_{3}(a)$, $e_{5}(a)$, and $e_{7}(a)$. \begin{figure}[th]
\begin{center}
\includegraphics[
width=10cm,
]{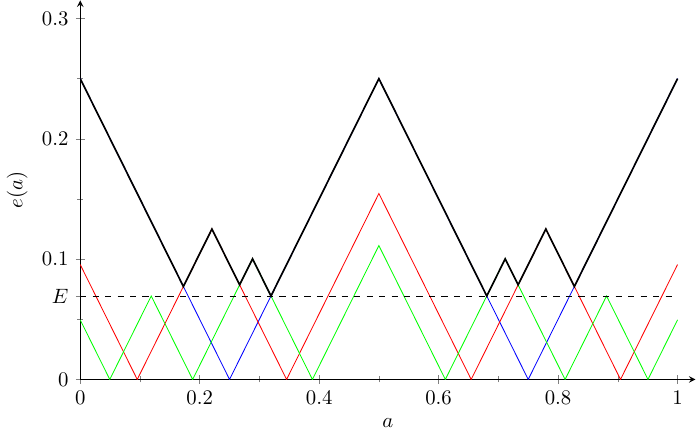}
\end{center}
\caption{The functions $e(a)$ (black) and $e_{3}(a)$, $e_{5}(a),$ $e_{7}(a)$
(blue, red, green). The dashed line indicates the minimum value~$E$
of the function~$e(a)$.}%
\label{Fig:hfdtsds}%
\end{figure}

\begin{lemma}
\label{Lem:bddta7d7gs}\upshape Fix any $a\in\left[  0,1\right]  $,
$\varepsilon>0$, $n\in\left\{  0,1,2,\ldots\right\}  $, and take the angles
$\theta_{a,\varepsilon,n}^{\flat},\theta_{a,\varepsilon,n}^{\sharp}\in\left[
n\frac{\pi}{2},\left(  n+1\right)  \frac{\pi}{2}\right]  $ such that%
\[
\sin^{2}(\theta_{a,\varepsilon,n}^{\flat})=\max\left(  a-\varepsilon,0\right)
,\quad\sin^{2}(\theta_{a,\varepsilon,n}^{\sharp})=\min\left(  a+\varepsilon
,1\right)  .
\]
Then%
\begin{equation}
0<\varepsilon\leq e(a) \label{eq:ndf7s5a4}%
\end{equation}
if an only if there is an $L\in\left\{  3,5,7,\ldots\right\}  $ such that%
\begin{equation}
\lbrack L\theta_{a,\varepsilon,n}^{\flat},L\theta_{a,\varepsilon,n}^{\sharp
}]\subset\left[  m\frac{\pi}{2},\left(  m+1\right)  \frac{\pi}{2}\right]
\quad\text{for some }m\in Ln+\left\{  0,1,\ldots,L-1\right\}  .
\label{eq:nfbf6srs}%
\end{equation}

\end{lemma}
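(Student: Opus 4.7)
My plan is to reformulate the geometric containment condition~(\ref{eq:nfbf6srs}) into an algebraic condition involving the critical values $\sin^{2}(\frac{l}{L}\frac{\pi}{2})$ used to define $e_{L}(a)$, and then take the union over $L$ to match $e(a)$. First I would observe that the map $\theta\mapsto L\theta$ sends $[n\frac{\pi}{2},(n+1)\frac{\pi}{2}]$ onto $[Ln\frac{\pi}{2},L(n+1)\frac{\pi}{2}]$, which is partitioned by the $L-1$ interior dividing points $(Ln+k)\frac{\pi}{2}$, $k=1,\ldots,L-1$, into the $L$ sub-intervals $[m\frac{\pi}{2},(m+1)\frac{\pi}{2}]$ for $m\in Ln+\{0,\ldots,L-1\}$. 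Hence~(\ref{eq:nfbf6srs}) holds for some $m$ precisely when none of these interior dividing points falls strictly between $L\theta_{a,\varepsilon,n}^{\flat}$ and $L\theta_{a,\varepsilon,n}^{\sharp}$. Dividing by $L$, this is equivalent to requiring that none of the points $n\frac{\pi}{2}+\frac{k}{L}\frac{\pi}{2}$, $k=1,\ldots,L-1$, lies strictly between $\theta_{a,\varepsilon,n}^{\flat}$ and $\theta_{a,\varepsilon,n}^{\sharp}$.

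Next I would invoke the monotonicity of $\sin^{2}$ on $[n\frac{\pi}{2},(n+1)\frac{\pi}{2}]$ together with the elementary identities $\sin^{2}(n\frac{\pi}{2}+\phi)=\sin^{2}(\phi)$ for $n$ even and $\sin^{2}(n\frac{\pi}{2}+\phi)=\sin^{2}(\frac{\pi}{2}-\phi)$ for $n$ odd, from which it follows that $\{\sin^{2}(n\frac{\pi}{2}+\frac{k}{L}\frac{\pi}{2}):k=1,\ldots,L-1\}$ coincides with $\{\sin^{2}(\frac{l}{L}\frac{\pi}{2}):l=1,\ldots,L-1\}$ regardless of the parity of $n$, and all of these critical values lie strictly inside $(0,1)$ since $L\ge 3$. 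Writing $a^{\flat}=\max(a-\varepsilon,0)$ and $a^{\sharp}=\min(a+\varepsilon,1)$, monotonicity of $\sin^{2}$ on the angle interval then converts the previous step into the requirement that no critical value $\sin^{2}(\frac{l}{L}\frac{\pi}{2})$ lies in the open interval $(a^{\flat},a^{\sharp})$.

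The key algebraic step is the equivalence, valid for any $v\in(0,1)$,
\[
v\in(a^{\flat},a^{\sharp}) \iff |v-a|<\varepsilon,
\]
which I would verify by a short case analysis on whether $a-\varepsilon<0$ and whether $a+\varepsilon>1$: in each clipped case the inequality on the clipped side is automatic for $v\in(0,1)$, while on the unclipped side it coincides with the corresponding one-sided part of $|v-a|<\varepsilon$. Applying this to each $v=\sin^{2}(\frac{l}{L}\frac{\pi}{2})$, condition~(\ref{eq:nfbf6srs}) for a given $L$ becomes equivalent to $|a-\sin^{2}(\frac{l}{L}\frac{\pi}{2})|\ge\varepsilon$ for all $l=1,\ldots,L-1$, i.e., $e_{L}(a)\ge\varepsilon$. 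Since $e_{L}(a)\to 0$ as $L\to\infty$, the supremum in~(\ref{eq:hdf76a55a}) is attained, so the existence of such an $L$ is equivalent to $e(a)\ge\varepsilon$; combined with the standing hypothesis $\varepsilon>0$ this yields~(\ref{eq:ndf7s5a4}).

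I expect the main difficulty to be bookkeeping rather than anything conceptual: verifying that the clipping $\max(\cdot,0)$ and $\min(\cdot,1)$ does not disturb the equivalence $v\in(a^{\flat},a^{\sharp})\Leftrightarrow|v-a|<\varepsilon$, keeping the even/odd $n$ cases aligned when shuttling between angles and their $\sin^{2}$-values, and confirming that the maximum in $e(a)$ is genuinely attained so that $\varepsilon\le e(a)$ delivers an actual admissible $L$.
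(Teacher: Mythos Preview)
Your argument is correct and follows essentially the same route as the paper's proof: both reduce condition~(\ref{eq:nfbf6srs}) for a fixed $L$ to the statement that $a$ lies at distance at least $\varepsilon$ from every critical value $\sin^{2}(\tfrac{l}{L}\tfrac{\pi}{2})$, i.e., $\varepsilon\le e_{L}(a)$, and then take the union over $L$. The paper's write-up is terse (it simply lists the intervals of admissible~$a$ and asserts that both conditions are characterised by them), whereas you spell out the geometric picture with the dividing points, the parity bookkeeping for~$n$, the clipping case analysis, and the attainment of the maximum in~$e(a)$; these are exactly the details the paper leaves implicit.
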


\begin{proof}
Condition (\ref{eq:ndf7s5a4}) is satisfied if and only if there is an
$L\in\left\{  3,5,7,\ldots\right\}  $ such that%
\begin{equation}
0<\varepsilon\leq e_{L}(a). \label{eq:hdftsfsca3}%
\end{equation}
Now observe that condition~(\ref{eq:hdftsfsca3}) holds whenever%
\begin{align*}
a  &  \in\left[  0,\sin^{2}\left(  \frac{1}{L}\frac{\pi}{2}\right)
-\varepsilon\right] \\
\text{or\quad}a  &  \in\left[  \sin^{2}\left(  \frac{1}{L}\frac{\pi}%
{2}\right)  +\varepsilon,\sin^{2}\left(  \frac{2}{L}\frac{\pi}{2}\right)
-\varepsilon\right] \\
\vdots & \\
\text{or\quad}a  &  \in\left[  \sin^{2}\left(  \frac{L-2}{L}\frac{\pi}%
{2}\right)  +\varepsilon,\sin^{2}\left(  \frac{L-1}{L}\frac{\pi}{2}\right)
-\varepsilon\right] \\
\text{or\quad}a  &  \in\left[  \sin^{2}\left(  \frac{L-1}{L}\frac{\pi}%
{2}\right)  +\varepsilon,1\right]  ,
\end{align*}
and so does condition~(\ref{eq:nfbf6srs}). This proves the lemma.
\end{proof}

\begin{remark}
\label{Rem:jjd7a5a}\upshape In fact, condition (\ref{eq:ndf7s5a4}) is
satisfied if and only if there is an $L\in\left\{  3,5,7\right\}  $ such
that\ (\ref{eq:nfbf6srs})~holds. This is so because, for each $a\in\left[
0,1\right]  $, the maximum in~(\ref{eq:hdf76a55a}) is attained at an
$L\in\left\{  3,5,7\right\}  $ (which depends on~$a$).
\end{remark}

The largest possible value of~$\varepsilon$ such that (\ref{eq:ndf7s5a4})
holds for all $a\in\left[  0,1\right]  $ is denoted by~$E$\thinspace. This
value, indicated by the dashed horizontal line in Figure~\ref{Fig:hfdtsds}, is%
\begin{equation}
E=\min_{a\in\lbrack0,1]}e(a)=\frac{1}{2}\left(  \sin^{2}\left(  \frac{3}%
{7}\frac{\pi}{2}\right)  -\sin^{2}\left(  \frac{1}{3}\frac{\pi}{2}\right)
\right)  \approx0.06937, \label{Eq:nnc74tsfa}%
\end{equation}
with the minimum attained at $a=\frac{1}{2}\left(  \sin^{2}\left(  \frac{3}%
{7}\frac{\pi}{2}\right)  +\sin^{2}\left(  \frac{1}{3}\frac{\pi}{2}\right)
\right)  $. (Note that the minimum is also attained at $a=\frac{1}{2}\left(
\sin^{2}\left(  \frac{4}{7}\frac{\pi}{2}\right)  +\sin^{2}\left(  \frac{2}%
{3}\frac{\pi}{2}\right)  \right)  $.)\ By Lemma~\ref{Lem:bddta7d7gs} and
Remark~\ref{Rem:jjd7a5a}, for every $a\in\left[  0,1\right]  $ and
$\varepsilon\in(0,E]$, there is an $L\in\left\{  3,5,7\right\}  $ such that
(\ref{eq:nfbf6srs}) holds.

We take%
\begin{align*}
f(a,\varepsilon)  &  =\frac{1}{2}|\theta_{a,\varepsilon,n}^{\sharp}%
-\theta_{a,\varepsilon,n}^{\flat}|\\
&  =\frac{1}{2}\left(  \arcsin\sqrt{\min\left(  a+\varepsilon,1\right)
}-\arcsin\sqrt{\max\left(  a-\varepsilon,0\right)  }\right)
\end{align*}
(where $\theta_{a,\varepsilon,n}^{\sharp},\theta_{a,\varepsilon,n}^{\flat}$
are defined in Lemma~\ref{Lem:bddta7d7gs}) as the accuracy of estimating the
angle $\theta$ such that $\sin^{2}\theta=a$, expressed as a function of $a$
and the accuracy~$\varepsilon>0$ of estimating~$a\in\left[  0,1\right]  $.
Note that $f(a,\varepsilon)$ does not depend on~$n$. The graph of $f(a,E)$ is
shown in Figure~\ref{Fig:nnfys65}. We also put%
\begin{equation}
F=\max_{a\in\left[  0,1\right]  }f(a,E), \label{Eq:ftsfa9am1}%
\end{equation}
indicated by the dashed horizontal line in Figure~\ref{Fig:nnfys65}. The maximum is
attained at $a=E$ (and also at $a=1-E$), so%
\begin{align}
F  &  =f(E,E)=\frac{1}{2}\arcsin\sqrt{2E}\nonumber\\
&  =\frac{1}{2}\arcsin\sqrt{\sin^{2}\left(  \frac{3}{7}\frac{\pi}{2}\right)
-\sin^{2}\left(  \frac{1}{3}\frac{\pi}{2}\right)  }\approx0.19084,
\label{Eq:bc645qfsaa}%
\end{align}
This is the maximum accuracy of estimating~$\theta$ when the accuracy of
estimating\ $a=\sin^{2}\theta$ is$~E$. \begin{figure}[th]
\begin{center}
\includegraphics[
width=10cm,
]{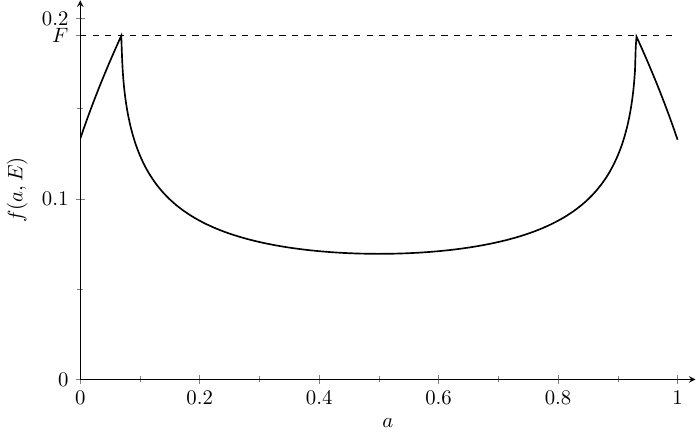}
\end{center}
\caption{The function $f(a,E)$. The dashed line indicates the maximum
value~$F$.}%
\label{Fig:nnfys65}%
\end{figure}

\section{Algorithm performance\label{Sect:hhf6sgx}}

The quantum computational complexity of the algorithm, which we denote by~$M$,
is understood as the number of applications of the operator~$\mathbf{Q}$. We
have shown that $M=O(\frac{1}{\varepsilon})$ for the QAE algorithms put
forward in the present paper. In fact, the constants in our $O(\frac
{1}{\varepsilon})$ bound for $M$ are low enough to outperform those in the
earlier papers, in particular~\cite{Gri2021} and~\cite{Fuk2023}. This can be
seen in Figure~\ref{Fig:anametf4d} by comparing the line labelled as
$\mathrm{AQAE}_{M}$ computed from~(\ref{Eq:bbc65a43}) in the present paper with the
lines $\mathrm{IQAE}_{M}$\ and $\mathrm{MIQAE}_{M}$ representing the bound%
\[
M<\frac{50}{\varepsilon}\ln\left(  \frac{1}{\alpha}\log_{2}\left(  \frac{\pi
}{4\varepsilon}\right)  \right)
\]
for the IQAE algorithm of Grinko \emph{et al.}~\cite{Gri2021}, formula~(8),
and the bound%
\[
M<\frac{62}{\varepsilon}\ln\left(  \frac{6}{\alpha}\right)
\]
for the MIQAE algorithm of Fukuzawa \emph{et al.}~\cite{Fuk2023},
formula~(3.32). Additionally, Figure~\ref{Fig:anametf4d} shows the bound
$\mathrm{AQAE}_{\mathbb{E}(M)}$ on the expected value of~$M$ obtained in
Section~\ref{Sect:f6a9m3} of the present paper.

\begin{landscape}
\begin{figure}[th]
\begin{center}
\includegraphics[
width=20cm,
]{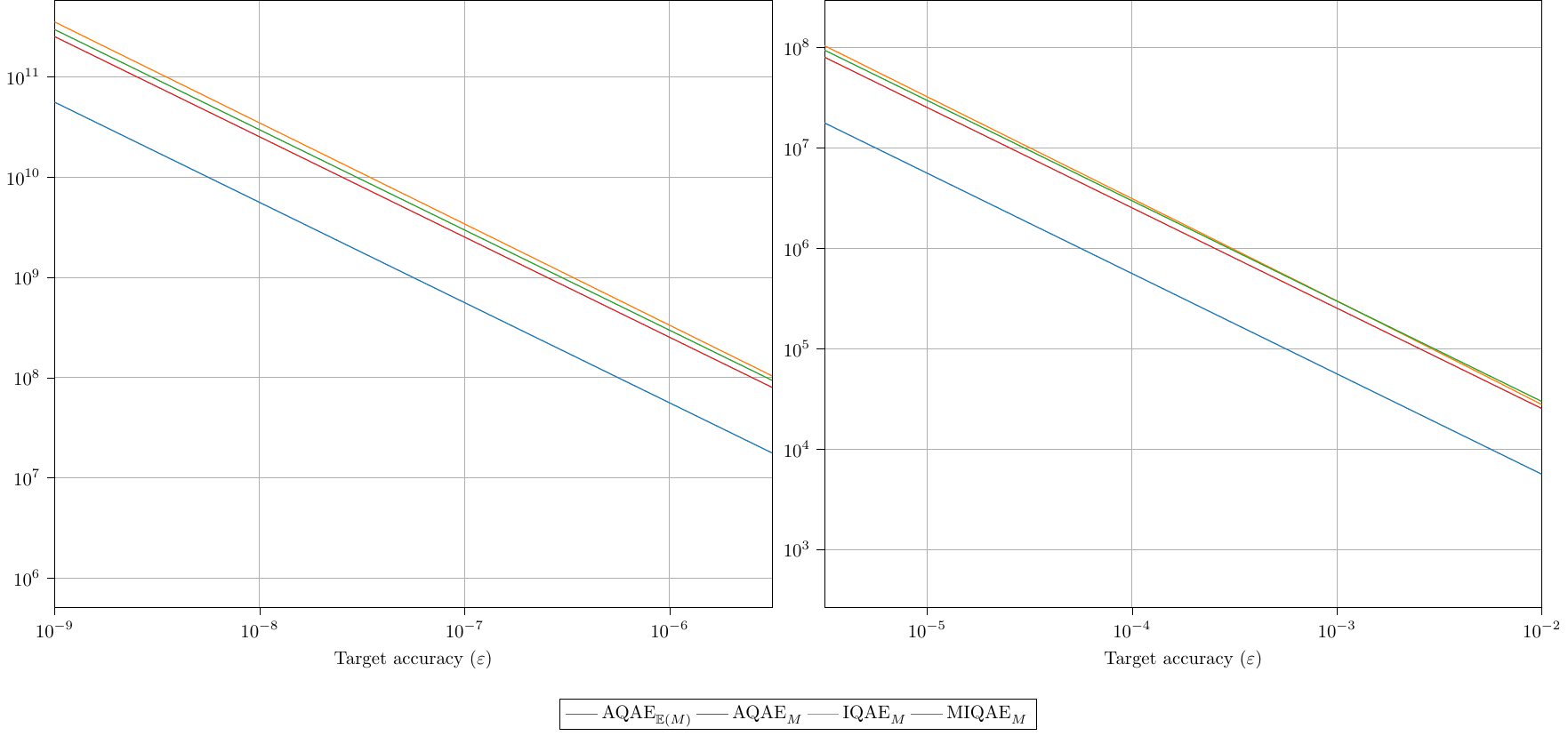}
\end{center}
\caption{Bounds on quantum computational complexity~$M$ depending on
the target accuracy~$\varepsilon$ as compared to the bounds for the IAQE
algorithm~\cite{Gri2021} and the MIQAE algorithm~\cite{Fuk2023} in the case
when $a=0.5$ and $\alpha=0.05$. A bound on the expectation $\mathbb{E}(M)$ for
the AQAE algorithm is also shown (see Section~\ref{Sect:f6a9m3}).}%
\label{Fig:anametf4d}%
\end{figure}
\end{landscape}

\begin{landscape}
\begin{figure}[th]
\begin{center}
\includegraphics[
width=20cm,
]{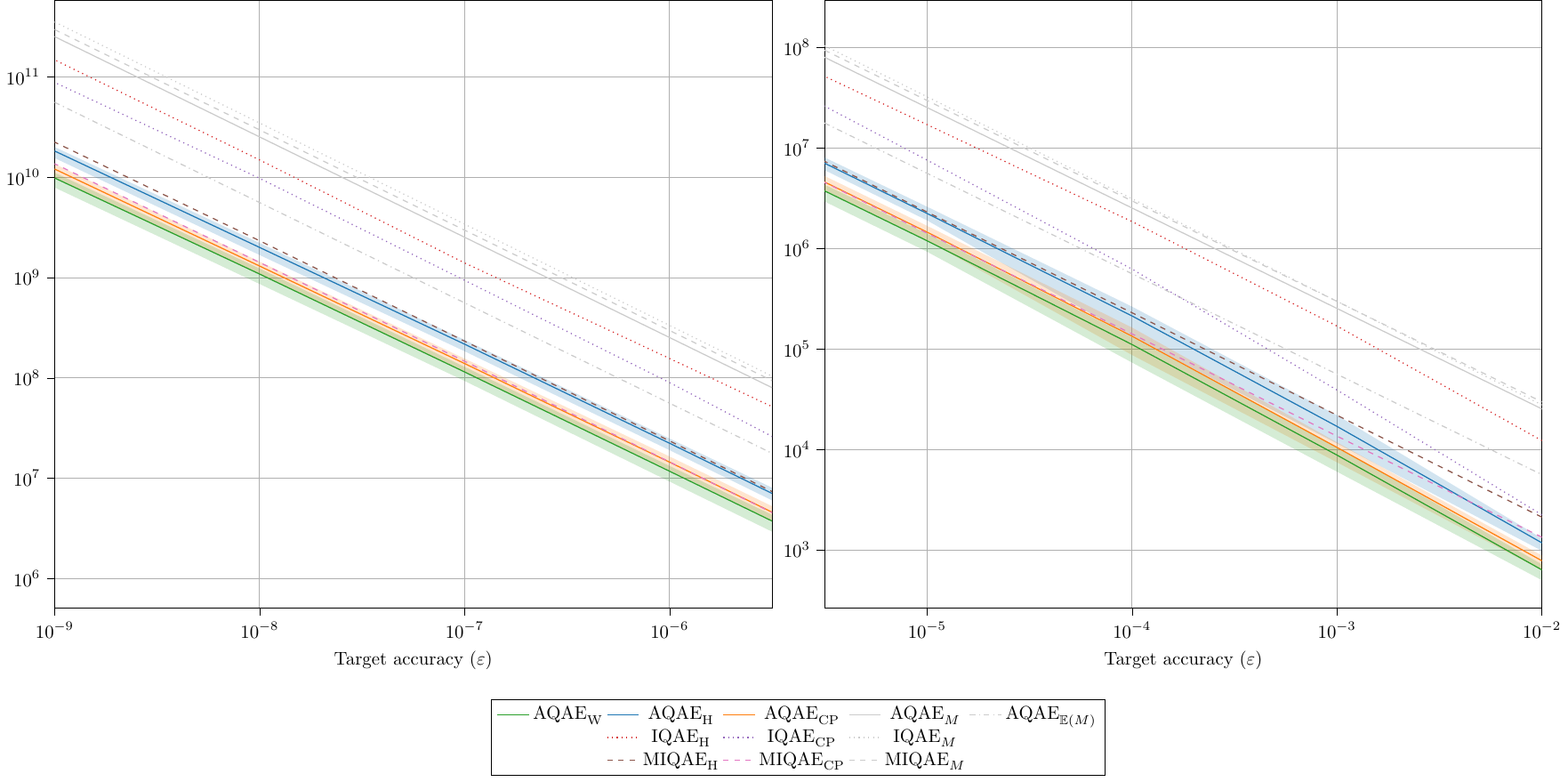}
\end{center}
\caption{Quantum computational complexity~$M$ depending on
the target accuracy~$\varepsilon$ as compared to the results from the IAQE
algorithm~\cite{IQAE} and MIQAE algorithm~\cite{MIQAE} in the case when
$a=0.5$ and $\alpha=0.05$. The values of~$M$ are averages over 2000 runs of the algorithms. The shaded areas indicate the 25\%--75\%
interquartile ranges. The gray lines show the theoretical bounds from
Figure~\ref{Fig:anametf4d}.}%
\label{Fig:aur1ba72}%
\end{figure}
\end{landscape}

Numerical experiments to assess the performance of Algorithm~\ref{algQAEacc}
are presented in Figure~\ref{Fig:aur1ba72} and compared with those from the
Qiskit implementation \cite{IQAE} of the IQAE algorithm and the
implementation~\cite{MIQAE} of the MIQAE algorithm.

For an implementation of the AQAE algorithm, see~\cite{AQAE}. In
addition to the Hoeffding confidence interval
\[
\lbrack\max(\hat{A}_{i}^{N}-E_{i}^{N},0),\min(\hat{A}_{i}^{N}-E_{i}^{N},0)]
\]
with~$E_{i}$ given by~(\ref{Eq:2darrw39k}) and~$\hat{A}_{i}^{N}$
by~(\ref{Eq:oofyrhste1}), we also implement Algorithm~\ref{algQAEacc} with the
Clopper--Pearson confidence interval
\[
\textstyle\left[  B\left(  \frac{\alpha_{i}}{2};n,N-n+1\right)
,B\left(  1-\frac{\alpha_{i}}{2};n+1,N-n\right)  \right]  ,
\]
where $B\left(  p;a,b\right)  $ is the $p$-th quantile of the beta
distribution with shape parameters~$a$ and~$b$. In Figure~\ref{Fig:aur1ba72}
the lines for the AQAE, IQAE and MIQAE algorithms with Hoeffding and
Clopper--Pearson confidence intervals are indicated by subscripts $H$
and~$CP$, respectively. Because $M$, the number of applications~of~$\mathbf{Q}%
$, is random, we show the averages of~$M$ over 2000 runs of the algorithms. To
facilitate comparison, we show the 25\%--75\% interquartile ranges (rendered
as shaded bands) for~$M$ from our AQAE algorithm. This shows superior
performance of Algorithm~\ref{algQAEacc} compared to~\cite{IQAE}
and~\cite{MIQAE}, and therefore compared to all the earlier algorithms.

The Hoeffding and Clopper--Pearson confidence intervals are exact in the sense
of guaranteeing the given confidence level for all possible values of the
binomial distribution probability parameter being estimated. In doing so, they
are also conservative. The performance of the algorithm can be improved
further by allowing approximate confidence intervals, which are narrower and
meet the prescribed confidence level approximately. This is possible because
the sum $\sum_{i=0}^{I}\alpha_{i}$, for which~$\alpha$ is an upper
bound as shown in~(\ref{Eq:3ga63bap}), in fact turns out significantly lower
than~$\alpha$ in numerical experiments. Here we implement Wilson's score
interval, the approximate confidence interval first proposed by
Wilson~\cite{Wil1927} in 1927; see also~\cite{AgCo1998}:%
\[
\textstyle\left[  \frac{\hat{A}_{i}^{N}+\frac{z_{\alpha_{i}/2}^{2}%
}{2N}-z_{\alpha_{i}/2}\sqrt{\frac{\hat{A}_{i}^{N}\left(  1-\hat
{A}_{i}^{N}\right)  }{N}+\frac{z_{\alpha_{i}/2}^{2}}{4N^{2}}}%
}{1+\frac{z_{\alpha_{i}/2}^{2}}{N}},\frac{\hat{A}_{i}^{N}%
+\frac{z_{\alpha_{i}/2}^{2}}{2N}+z_{\alpha_{i}/2}\sqrt
{\frac{\hat{A}_{i}^{N}\left(  1-\hat{A}_{i}^{N}\right)  }{N}+\frac
{z_{\alpha_{i}/2}^{2}}{4N^{2}}}}{1+\frac{z_{\alpha_{i}%
/2}^{2}}{N}}\right]  ,
\]
where $z_{c}$ denotes the $1-c$ quantile of the standard normal distribution.
Wilson's score interval does indeed improve the performance of the algorithm
still further, as can be seen in Figure~\ref{Fig:aur1ba72}, line
$\mathrm{AQAE}_{W}$.

\subsection{Computer-assisted bound for the expectation of~$M$%
\label{Sect:f6a9m3}}

The theoretical bound $\mathrm{AQAE}_{M}$ for the number of applications~$M$
of~$\mathbf{Q}$ is quite high compared to the results $\mathrm{AQAE}_{H}$ (and
also $\mathrm{AQAE}_{CP}$ and $\mathrm{AQAE}_{W}$) of numerical experiments in
Figure~\ref{Fig:aur1ba72}. Since~$M$ is random in Algorithm~\ref{algQAEacc}, a
bound for the expected value of~$M$ would be more relevant. We achieve such a
bound by first evaluating the expectation of the number of iterations~$N_{i}%
^{\prime}$ of the inner loop for each round~$i$ of the outer loop in
Algorithm~\ref{algQAEacc}, conditioned on the algorithm outcomes for rounds
$0,1,\ldots,i-1$ and on~$I$. This involves computer-assisted computations of
the binomial distribution probabilities of certain events.

The value of~$N_{i}^{\prime}$ is obtained by running the inner loop in
Algorithm~\ref{algQAEacc}. From~$\alpha_{i}$ we compute~$E_{i}^{N}$
using~(\ref{Eq:2darrw39k}). According to Lemma~\ref{Lem:bddta7d7gs}, the inner
loop stops at an iteration~$N=N_{i}^{\prime}$ such that $E_{i}^{N}\leq
e(\hat{A}_{i}^{N})$. This allows us to view~$N_{i}^{\prime}$ as a random
variable with values in $\{1,\ldots,N_{i}\}$ depending on~$\alpha_{i}$, and to view the conditional expectation of~$N_{i}^{\prime}$ as a function
of~$\alpha_{i}$.

We compute the probability $\mathbb{P}_{i}(E_{i}^{N}\leq e(\hat{A}_{i}^{N}))$
for each~$N$ in $\{1,\ldots,N_{i}\}$, conditioned on the algorithm outcomes
for rounds $0,1,\ldots,i-1$ and on~$I$. Then we use these conditional
probabilities to evaluate a bound for the expectation~$\mathbb{E}_{i}%
(N_{i}^{\prime})$, also conditioned on the algorithm outcomes for rounds
$0,1,\ldots,i-1$ and on~$I$. Observe that $E_{i}^{N}\leq e(\hat{A}_{i}^{N})$
whenever~$\hat{A}_{i}^{N}$ belongs to the union of the intervals depicted in
Figure~\ref{Fig:la8n7a54v}. This union of intervals will be denoted
by~$C_{i}^{N}$. We have $\hat{A}_{i}^{N}\in C_{i}^{N}$ iff $n\in NC_{i}^{N}$,
where~$n$ follows the binomial distribution $B(A_{i},N)$, that is, it is the
number of successful outcomes in $N$ independent Bernoulli trials with
probability of success~$A_{i}$. This allows the conditional probability of
$E_{i}^{N}\leq e(\hat{A}_{i}^{N})$ to be expressed as%
\[
\mathbb{P}_{i}(\hat{A}_{i}^{N}\in C_{i}^{N})=\mathbb{P}_{i}(E_{i}^{N}\leq
e(\hat{A}_{i}^{N}))=\sum_{n\in NC_{i}}\binom{N}{n}A_{i}^{n}(1-A_{i})^{N-n}.
\]
\begin{figure}[th]
\begin{center}
\includegraphics[
width=10cm,
]{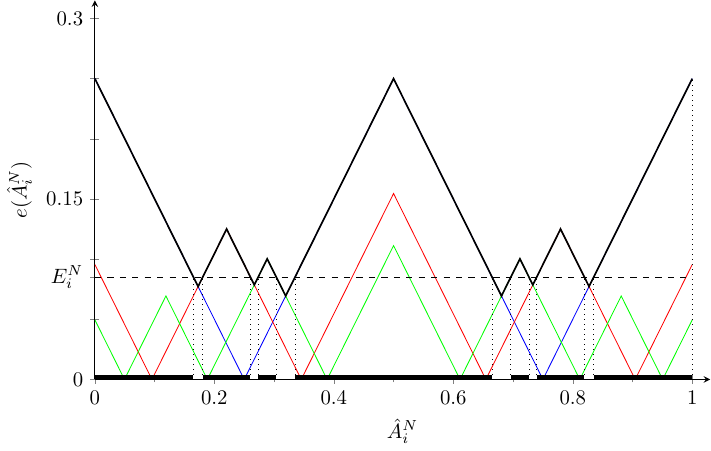}
\end{center}
\caption{The set $C_{i}^{N}$ of values of $\hat{A}_{i}^{N}$ such that
$E_{i}^{N}\leq e(\hat{A}_{i}^{N})$ is the union of the bold intervals on the
horizontal axis.}%
\label{Fig:la8n7a54v}%
\end{figure}

Then, a bound for the conditional expectation~$\mathbb{E}_{i}(N_{i}^{\prime})$
of the number of shots~$N_{i}^{\prime}$ in the~$i$th round of the outer loop
can be obtained as follows. For any $J\in\left\{  1,\ldots,N_{i}\right\}  $,
we have%
\begin{align*}
\mathbb{E}_{i}(N_{i}^{\prime}) &  =\sum_{N=1}^{N_{i}}N\mathbb{P}_{i}\left(
N=N_{i}^{\prime}\right)  =\sum_{N=1}^{J}N\mathbb{P}_{i}\left(  N=N_{i}%
^{\prime}\right)  +\sum_{N=J+1}^{N_{i}}N\mathbb{P}_{i}\left(  N=N_{i}^{\prime
}\right)  \\
&  \leq J\mathbb{P}_{i}\left(  J\geq N_{i}^{\prime}\right)  +N_{i}%
\mathbb{P}_{i}\left(  J<N_{i}^{\prime}\right)  \leq J+(N_{i}-J)\mathbb{P}%
_{i}\left(  J<N_{i}^{\prime}\right)  .
\end{align*}
Since $J<N_{i}$ implies that $E_{i}^{J}>e(\hat{A}_{i}^{J})$, it follows that%
\[
\mathbb{E}_{i}(N_{i}^{\prime})\leq J+(N_{i}-J)\mathbb{P}_{i}(E_{i}^{J}%
>e(\hat{A}_{i}^{J}))=J+(N_{i}-J)\left(  1-\mathbb{P}_{i}(\hat{A}_{i}^{J}\in
C_{i}^{J})\right)  .
\]
This holds for each $J\in\left\{  1,\ldots,N_{i}\right\}  $, so%
\[
\mathbb{E}_{i}(N_{i}^{\prime})\leq\min_{J\in\left\{  1,\ldots,N_{i}\right\}
}\left(  J+(N_{i}-J)\left(  1-\mathbb{P}_{i}(\hat{A}_{i}^{J}\in C_{i}%
^{J})\right)  \right)  .
\]
This bound for $\mathbb{E}_{i}(N_{i}^{\prime})$ can be evaluated precisely. We
use computer-assisted computations for this purpose, first to compute the
probabilities~$\mathbb{P}_{i}(\hat{A}_{i}^{J}\in C_{i}^{J})$ and then the
minimum. For the Python code, see \cite{CondExp}. The results are presented in Figure~\ref{Fig:exp-bounds}, which shows that%
\[
\mathbb{E}_{i}(N_{i}^{\prime})<\frac{1}{6}N_{i}+40.
\]%
\begin{figure}[th]
\begin{center}
\includegraphics[
width=7cm,
]{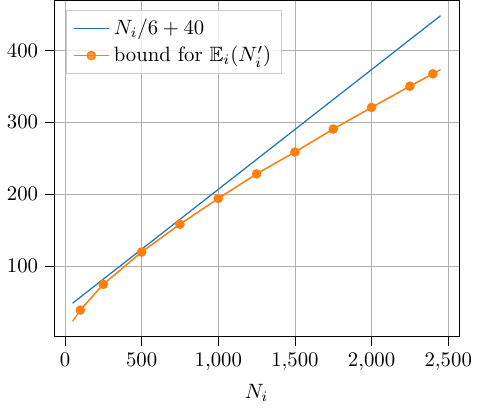}
\end{center}
\caption{The expectation $\mathbb{E}_{i}(N_{i}^{\prime})$ computed for various values of~$N_i$ versus the bound $\frac{1}{6}N_{i}+40$.}%
\label{Fig:exp-bounds}%
\end{figure}

Now we can obtain a bound for the expectation~$\mathbb{E}(M)$ of the
number~$M$ of applications of~$\mathbf{Q}$, where%
\[
M=\sum_{i=0}^{I}\frac{K_{i}-1}{2}N_{i}^{\prime}<\frac{1}{2}\sum_{i=0}^{I}%
K_{i}N_{i}^{\prime}.
\]
We have%
\begin{align*}
\mathbb{E}(M|I)  &  <\mathbb{E}\left(  \left.  \frac{1}{2}\sum_{i=0}^{I}%
K_{i}N_{i}^{\prime}\right\vert I\right)  =\mathbb{E}\left(  \left.  \frac
{1}{2}\sum_{i=0}^{I}K_{i}\mathbb{E}_{i}(N_{i}^{\prime})\right\vert I\right) \\
&  <\mathbb{E}\left(  \left.  \frac{1}{2}\sum_{i=0}^{I}K_{i}\left(  \frac
{1}{6}N_{i}+40\right)  \right\vert I\right)  =\mathbb{E}\left(  \left.
\frac{1}{12}\sum_{i=0}^{I}K_{i}N_{i}+20\sum_{i=0}^{I}K_{i}\right\vert
I\right)  ,
\end{align*}
so%
\[
\mathbb{E}(M)<\mathbb{E}\left(  \frac{1}{12}\sum_{i=0}^{I}K_{i}N_{i}%
+20\sum_{i=0}^{I}K_{i}\right)  .
\]
By recycling some of the estimates at the end of the proof of
Theorem~\ref{Thm:p9a3n7gd}, we obtain%
\begin{align*}
\frac{1}{2}\sum_{i=0}^{I}K_{i}N_{i}  &  <\frac{1}{\varepsilon}\left(  \frac
{1}{4E^{2}}\frac{\pi}{4}\left(  \frac{3}{2}\ln\left(  \frac{8}{C\alpha\pi}\right)  +\frac{3}{4}\ln\left(  3\right)  \right)  +\frac{3\pi}%
{16}\right)  ,\\
\sum_{i=0}^{I}K_{i}  &  <\sum_{i=0}^{I}3^{-I+i}\frac{\pi}{4}=\sum_{j=0}%
^{I}3^{-j}\frac{\pi}{4\varepsilon}<\frac{1}{\varepsilon}\frac{\pi}{6},
\end{align*}
so%
\begin{align*}
\mathbb{E}(M)  &  <\frac{1}{\varepsilon}\left(  \frac{1}{12}\left(  \frac
{1}{4E^{2}}\frac{\pi}{4}\left(  \frac{3}{2}\ln\left(  \frac{8}{C\alpha\pi}\right)  +\frac{3}{4}\ln\left(  3\right)  \right)  +\frac{3\pi}%
{16}\right)  +\frac{20\pi}{6}\right) \\
&  \approx\frac{1}{\varepsilon}\left(  27.380-10.201\ln\alpha\right)  .
\end{align*}
This bound for $\mathbb{E}(M)$ is shown by the line labelled as $\mathrm{AQAE}%
_{\mathbb{E}(M)}$ in Figures~\ref{Fig:anametf4d} and~\ref{Fig:aur1ba72}. As
expected, it is much tighter than the bound $\mathrm{AQAE}_{M}$ (hence even tighter than the
bounds $\mathrm{IQAE}_{M}$ and $\mathrm{MIQAE}_{M}$) for~$M$.

\end{document}